\documentclass[lettersize,journal]{IEEEtran}
\usepackage{amsmath}
\usepackage{graphicx}
\usepackage{caption}
\usepackage{amsfonts}
\usepackage{color}
\usepackage{subcaption}
\usepackage{amsthm}
\usepackage{amssymb}
\usepackage{comment}
\usepackage{algorithm}
\usepackage{algpseudocode}
\usepackage{stfloats}

\usepackage{cuted}
\usepackage{bbm}
\usepackage{cleveref}
\usepackage[english]{babel}
\usepackage{amsthm}
\usepackage{bigints}

\usepackage{pifont}
\newcommand{\cmark}{\text{\ding{51}}}
\newcommand{\xmark}{\text{\ding{55}}}

\newtheorem{theorem}{Theorem}

\algrenewcommand\algorithmicrequire{\textbf{Input:}}
\algrenewcommand\algorithmicensure{\textbf{Output:}}

\IEEEoverridecommandlockouts
\begin{document}
\title{On the Effect of Pulse Shaping Filters in Zak-OTFS Waveform for Radar Sensing}
\author{Abhishek Bairwa and Ananthanarayanan Chockalingam\thanks{The authors are with the Department of Electrical Communication Engineering, Indian Institute of Science, Bangalore 560012, Karnataka State, India. Email: \{abhishekbair,achockal\}@iisc.ac.in.} 
}
\maketitle

\begin{abstract} 
In radar sensing, the self-ambiguity function of the probing waveform plays a crucial role in the resolvability and detection of multiple targets. A desired distribution of the ambiguity volume (volume under squared self-ambiguity function) for good sensing performance is characterized by 1) narrow main lobe, 2) low peak sidelobe ratio, and 3) low integrated sidelobe ratio. With Zak-OTFS waveform, this volume distribution is achieved by choosing appropriate delay-Doppler (DD) pulse shaping filter. In the recent Zak-OTFS based radar literature, Gaussian pulse shaping filter has been considered, and it has been shown to offer better range/velocity estimation performance compared to the traditional chirp waveform in scenes with multiple targets. While the self-ambiguity function with Gaussian filter has very low side lobes, its main lobe is wide which compromises resolvability and performance. Motivated by this, we seek filters with better ambiguity characteristics. Specifically, in this paper, we explore two other known filters, namely, sinc and Gaussian-sinc (GS) filters, and demonstrate that these filters offer better performance compared to Gaussian filter under different scenarios and receiver processing. Towards demonstrating this, we derive closed-form expressions for the self-ambiguity functions of Zak-OTFS waveform with sinc and GS filters, which have not been reported before. The ambiguity functions of sinc and GS filtered waveforms have narrow main lobes, resulting in better resolvability in scenes with densely populated targets for the basic peak-detection based receiver. The ambiguity function of Gaussian filtered waveform has very low sidelobes, resulting in better performance in sparsely populated scenes. When a receiver with inter-target interference mitigation is used, the sinc and GS filters perform better in both dense and sparsely populated scenes compared to Gaussian filter.
\end{abstract}
\begin{IEEEkeywords}
Zak-OTFS waveform, radar sensing, self-ambiguity function, delay-Doppler domain, pulse shaping filter. 
\end{IEEEkeywords}

\section{Introduction}
\label{sec:intro}
Integrated sensing and communication (ISAC) is projected to be a prominent feature in the next generation of wireless communication systems \cite{ITU_future_trend}. Advanced waveforms that can deliver robust communication as well as good radar sensing performance will be crucial for ISAC \cite{ITU_future_trend}-\cite{ISAC_2}. One such emerging waveform is the orthogonal time frequency space (OTFS) waveform which is a pulse in the delay-Doppler (DD) domain and a pulsone in the time domain \cite{otfs1}-\cite{otfs2}. Earlier works on OTFS have focused on the multicarrier version of OTFS (MC-OTFS) for communication and radar sensing \cite{otfs_ita}-\cite{otfs_isac4}. Recently, Zak transform based OTFS (Zak-OTFS) waveform has been shown to give more robust communication performance in wireless channels characterized by large Doppler spreads \cite{zak_otfs1}-\cite{zak_otfs7}. In addition, in the recent literature, Zak-OTFS waveform has been shown to be a better radar waveform for resolvability/detection of multiple targets compared to the traditional chirp waveform \cite{zak_otfs_radar},\cite{zak_polarimetry}.

In radar sensing, self-ambiguity function of the probing waveform plays a crucial role in the resolvability and detection of multiple targets \cite{richards_book}. Formally, the self-ambiguity function is defined as the cross-correlation of the waveform with its delayed and Doppler-shifted version. The volume under the ambiguity surface can be interpreted as an inherent “blur” that limits the ability to distinguish multiple targets. According to Moyal’s identity \cite{moran}, the total volume under the squared self-ambiguity function is finite and equals the squared energy of the waveform, implying that this blur is fundamentally unavoidable. Consequently, radar waveform design focuses on distributing this volume in a manner that minimizes performance degradation.

The self-ambiguity of the traditionally used chirp waveform is supported on a line in the DD domain. To detect a single target, an up-chirp and a down-chirp signals are transmitted successively. The intersection point of lines corresponding to self-ambiguity of the up-chirp and the down-chirp gives the target's DD location, which gives the range and velocity of the target. For a radar scene consisting of multiple targets, this approach results in intersection points not corresponding to any real target, giving rise to the so called ``ghost'' targets. To distinguish ghost targets from real targets, multiple pairs of up-chirp/down-chirp are transmitted, which compromises the delay (range) and Doppler (velocity) resolution \cite{zak_otfs_radar},\cite{chirp}. 

In contrast, the self-ambiguity of a Zak-OTFS waveform is supported on a lattice in the DD domain \cite{zak_otfs3},\cite{zak_amb_swaroop}, providing better localization leading to unambiguous target detection. A filtered Zak-OTFS waveform has the volume under the ambiguity function concentrated around these lattice points. Specifically, the volume/blur around the origin plays a significant role in the detection and resolvability of multiple targets. In Zak-OTFS, the distribution of this volume is affected by the DD domain pulse shaping filter used in the waveform generation. Different pulse shaping filters will result in different distributions of the volume, and hence different sensing performances.

Recently, in \cite{zak_otfs_radar}, a Gaussian filtered Zak-OTFS waveform has been considered for radar sensing in a multi-target scene. It has been shown to achieve significantly better range/velocity estimation performance compared to the traditional chirp waveform and the MC-OTFS waveform (Fig. 10 in \cite{zak_otfs_radar}), due to the favorable characteristics of its self-ambiguity function compared to that of the chirp and MC-OTFS waveforms. Although the self-ambiguity function of a Gaussian filtered Zak-OTFS waveform around origin is more localized and has very low side lobes, its main lobe is wide. Consequently, the targets which are close to each other in the DD domain are difficult to resolve. This motivated us to seek filters with better ambiguity function characteristics that can achieve better target resolvability and detection performance, which forms the basic premise in this paper. 
 
Towards this end, in this paper, we consider two other known filters, namely, sinc and Gaussian-sinc (GS) filters \cite{zak_otfs2},\cite{gs},\cite{closed_form}, and demonstrate that these two filters offer better performance compared to the Gaussian filter under different scenarios and receiver processing. The key contributions and insights in this work can be summarized as follows.
\begin{itemize}
\item {\em Closed-form expressions for self-ambiguity functions:} 
The Zak-OTFS framework enables obtaining precise expressions for the self-ambiguity function of the waveform generated using different pulse shaping filters \cite{zak_otfs3}. Exploiting this attribute, a closed-form expression for the ambiguity function of Zak-OTFS waveform with sinc filter has been reported in the literature \cite{zak_otfs3} (Ch. 10, Th. 10.5). In this paper, we derive closed-form expressions for the ambiguity function of Gaussian and GS filtered waveforms, which are new and useful contributions. The availability of the expressions is summarized in Table \ref{tab1}. The derived expressions allow significant reduction in simulation run times compared to numerical evaluation of integrals, e.g., simulation run times get reduced by factors of about 10 and 80 for GS and Gaussian filters, respectively, compared to numerical evaluation.     

\item {\em Characteristics of the self-ambiguity functions:}
As the structure of the self-ambiguity function of a waveform plays a vital role in determining its sensing performance, we characterize the self-ambiguity functions of Zak-OTFS waveform with sinc, Gaussian and GS filters in terms of 1) main lobe width, 2) peak sidelobe ratio (PSLR)\footnote{PSLR is defined as the ratio of the peak of the highest sidelobe to the peak of the main lobe in the ambiguity function.} and 3) integrated sidelobe ratio (ISLR)\footnote{ISLR is defined as the ratio of area under the sidelobes to the area under the main lobe in the ambiguity function.}. The ambiguity function of the Gaussian filtered waveform exhibits a wider main lobe than those of the sinc and GS filters, and almost no discernible sidelobes. In contrast, the sinc and GS filters have a narrow main lobe and significant sidelobes (see Fig. \ref{fig:zero_dop_cut_self_amb}). Although both sinc and GS filters exhibit similar PSLR, they differ in terms of ISLR, with the sinc filter having a significantly higher ISLR compared to that of the GS filter. These characteristics are summarized in quantitative terms in Table \ref{tab2}. 

\item {\em Comparative sensing performance of different filters:}
We present a comparison of the sensing performance with sinc, Gaussian, and GS filters in terms of detection capability (receiver operating characteristic - ROC) and range/velocity estimation accuracy in two different radar scenes: 1) scene with densely populated targets, and 2) scene with sparsely populated targets. When a basic peak-detection based receiver is used, the sinc and GS filters provide better detection and estimation performance compared to those of the Gaussian filter in densely populated scene, due to the narrow main lobe of their ambiguity functions. In sparsely populated scenes, good detection performance is dictated by low ISLR in the ambiguity function. Thus, in sparse scenes, the Gaussian filter provides the best detection capability among the three filters, followed by the GS filter. Whereas, the high PSLRs of the sinc and GS filters result in weaker targets getting affected by stronger targets in sparse scenes, leading to poor estimation accuracy. To improve performance, we consider a receiver with inter-target interference (ITI) mitigation which involves estimation of delay, Doppler, and fade parameters associated with a target, and subsequently reconstructing and subtracting its contribution from the whole cross-ambiguity function, to aid better estimation of parameters associated with the other targets. When the ITI mitigating receiver is used, because of  effective interference estimation and mitigation enabled by their narrow main lobes, the sinc and GS filters perform better in both dense and sparsely populated scenes compared to the Gaussian filter.
\end{itemize}

\begin{table}[t]
\centering
\vspace{2mm}
\begin{tabular}{|l||c|c|c|}
\hline
DD filter type &  \multicolumn{3}{|c|}{Closed-form expression for self-ambiguity function} \\ 
\cline{2-4} 
& \multicolumn{1}{c|}{Ref. \cite{zak_otfs_radar}} & \multicolumn{1}{c|}{Ref. \cite{zak_otfs3} (Ch. 10)} & \multicolumn{1}{c|}{This paper} 
\\ \hline \hline
Sinc            & {\color{red} $\xmark$} & {\color{blue} $\cmark$} & {\color{red} $\xmark$} 
\\ \hline
Gaussian  & {\color{red} $\xmark$}     & {\color{red} $\xmark$} & {\color{blue} $\cmark$} 
\\ \hline
Gaussian-sinc  & {\color{red} $\xmark$} & {\color{red} $\xmark$} & {\color{blue} $\cmark$} 
\\ \hline
\end{tabular}
\caption{Availability of closed-form expressions for the self ambiguity functions for different filters.}
\label{tab1}
\end{table}

\begin{table}[t]
\centering
\vspace{2mm}
\begin{tabular}{|l||c|c|c|}
\hline
DD filter type &  \multicolumn{3}{|c|}{Self-ambiguity function characteristics} \\ 
\cline{2-4} 
& \multicolumn{1}{c|}{Main lobe width} & \multicolumn{1}{c|}{PSLR} & \multicolumn{1}{c|}{ISLR} 
\\ 
& \multicolumn{1}{c|}{(@ -25 dB level)} & \multicolumn{1}{c|}{} & \multicolumn{1}{c|}{} 
\\ \hline \hline
Sinc            & 1.99 &  -6.65 dB & 2.31 dB  
\\ \hline
Gaussian  & 5.39     & - & - 
\\ \hline
Gaussian-sinc  & 2.12 & -6.92 dB & -3.12 dB \\ \hline
\end{tabular}
\caption{Characteristics of the self ambiguity functions of different filters.} 
\label{tab2}
\end{table}

The rest of the paper is organized as follows. Section \ref{sec:system_model} introduces the Zak-OTFS system model for radar sensing. In Section \ref{sec:selfamb}, we derive closed-form expressions for the self-ambiguity functions of Zak-OTFS waveforms filtered through different pulse shaping filters. In Section \ref{sec:effect}, we highlight the effect of pulse shaping filters on estimation accuracy in different radar scenes and present an ITI mitigation scheme to improve performance. Section \ref{sec:results} presents the performance results and discussions. Conclusions and possible future work are presented in Section \ref{sec:concl}.

\section{Zak-OTFS system model for radar sensing}
\label{sec:system_model}

\begin{figure*}
\centering    
\includegraphics[width=0.8\linewidth]{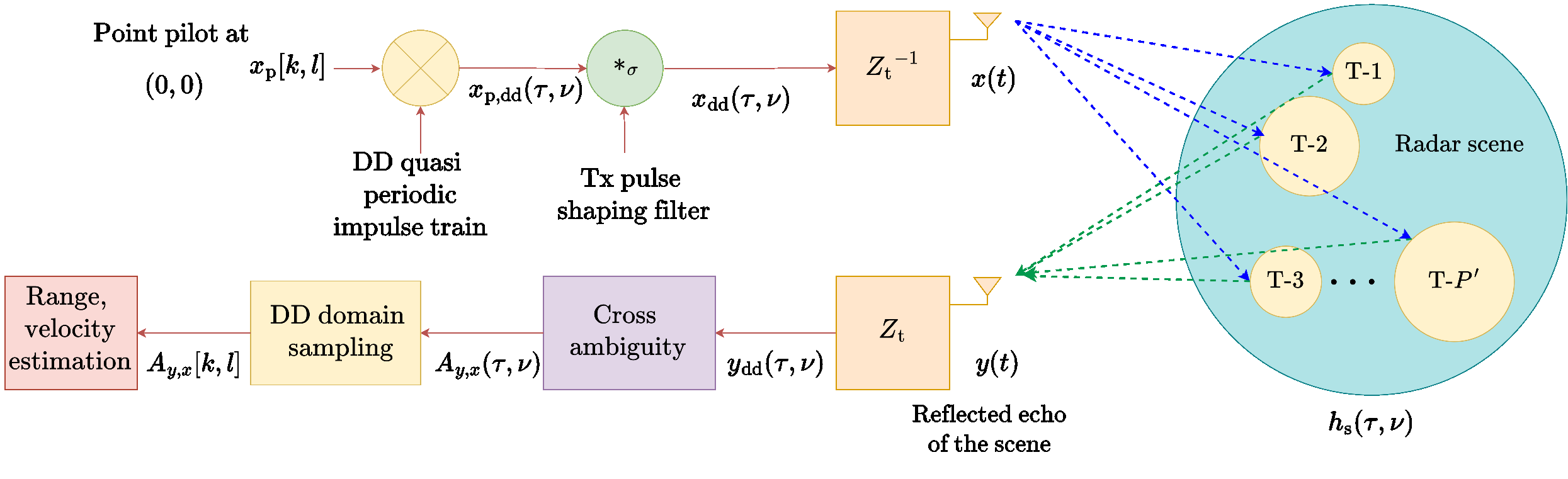}
\vspace{1mm}
\caption{Block diagram of Zak-OTFS transceiver for radar sensing.}
\label{Block_diagram}      
\vspace{-4mm}
\end{figure*}

In a Zak-OTFS based radar system, a pulse in the DD domain is used as the probing waveform for sensing. A pulse in the DD domain is a quasi-periodic function, parameterized by delay period $\tau_{\text{p}}$ and Doppler period $\nu_{\text{p}}$, with $\tau_{\text{p}}\nu_{\text{p}}=1$. The fundamental region $\mathcal{D}_{0}$ in the DD domain is defined as $\mathcal{D}_{0}  = \left\{(\tau,\nu)\ | \ 0\leq\tau<\tau_{\text{p}}, \ 0\leq\nu<\nu_{\text{p}}\right\}$, where $\tau$ and $\nu$ are the delay and Doppler variables, respectively. $\mathcal{D}_{0}$ is divided into $M$ bins along the delay axis and $N$ bins along the Doppler axis, where $M = B\tau_{\text{p}}$ and $N = T\nu_{\text{p}}$ for a probing waveform limited to bandwidth $B$ and time duration $T$. In order to generate the probing waveform, a unit energy impulse function in the DD domain, i.e., $x_{\text{p}}[k,l]$ = $\delta[k,l]$, $0 \leq k< M$, $0 \leq l <N$, where $\delta[.]$ is the Kronecker delta function, is converted into a quasi-periodic discrete DD domain signal $x_{\text{p,dd}}[k,l]$ using the following encoding:
\begin{eqnarray}
    x_{\text{p,dd}}[k+nM,l+mN] = x_{\text{p}}[k,l]e^{j2\pi n \frac{l}{N}}, \ n,m\in\mathbb{Z}. 
    \label{eq1}
\end{eqnarray}
The discrete DD domain signal in (\ref{eq1}) is converted into a continuous DD domain signal, given by
\begin{eqnarray}
   x_{\text{p,dd}}(\tau,\nu)=\sum_{k,l\in \mathbb{Z}} x_{\text{p,dd}}[k,l] \delta\Big(\tau-\frac{k\tau_{\text{p}}}{M}\Big)\delta\Big(\nu-\frac{l\nu_{\text{p} }}{N}\Big), 
   \label{eq2}
\end{eqnarray}
where $\delta(.)$ is the Dirac-delta function. The continuous DD domain signal in (\ref{eq2}) occupies infinite time duration and bandwidth. Thus, in order to limit the time duration and bandwidth of the probing waveform, a transmit DD domain pulse shaping filter $w_{\text{tx}}(\tau,\nu)$ is applied to $x_{\text{p,dd}}(\tau, \nu)$. The pulse shaped continuous DD domain signal is given by  
\begin{eqnarray}
\label{eq3}
    x_{\text{dd}}(\tau,\nu) = w_{\text{tx}}(\tau,\nu) *_{\sigma} x_{\text{p,dd}}(\tau, \nu),
\end{eqnarray}
where $*_{\sigma}$ denotes the twisted convolution\footnote{$a(\tau,\nu)*_{\sigma}b(\tau,\nu) = \int\int a(\tau',\nu')b(\tau-\tau',\nu-\nu')e^{j2\pi\nu'(\tau-\tau')}d\tau'd\nu'.$}. The probing waveform $x(t)$ is the time domain realization of $x_{\text{dd}}(\tau, \nu)$, obtained as $x(t) = {Z_{t}}^{-1}(x_{\text{dd}}(\tau, \nu))$ where $Z_{t}^{-1}$ denotes the inverse Zak transform\footnote{$Z_{t}^{-1}(a(\tau,\nu)) =\sqrt{\tau_{\text{p}}}\int_{0}^{\nu_{\text{p}}}a(t,\nu)d\nu.$}. The probing signal $x(t)$ is transmitted from the transmit antenna to sense the radar scene whose DD spreading function is given by 
\begin{eqnarray}
\label{eq4}
    h_{\text{s}}(\tau,\nu) = \sum_{i =1}^{P'}h_{i}\delta(\tau-\tau_{i})\delta(\nu-\nu_{i}),
\end{eqnarray}
where $P'$ is the number of targets in the scene, $\tau_{i}$, $\nu_{i}$, and $h_{i}$ are the delay, Doppler, and complex fade associated with the $i$th target. 
The probing waveform interacts with the radar scene, and the echo signal $y(t)$ is received by the receive antenna co-located with the transmit antenna. The received echo signal is given by 
\begin{eqnarray}
    y(t) = \int\int h_{\text{s}}(\tau,\nu)x(t-\tau)e^{j2\pi\nu(t-\tau)}d\tau d\nu + n(t),
\end{eqnarray}
where $n(t)$ is the additive white Gaussian noise (AWGN) with variance $N_{0}$, i.e., $\mathbb{E}[n(t)n^{*}(t-t')] = N_{0} \delta(t')$. At receiver, the echo signal $y(t)$ is converted into a DD domain signal, given by $y_{\text{dd}}(\tau,\nu) = Z_{t}(y(t))$ where $Z_{t}$ is the Zak transform\footnote{$Z_{t}(a(t)) = \sqrt{\tau_{\text{p}}}\Sigma_{k\in\mathbb{Z}}a(\tau+k\tau_{\text{p}})e^{-j2\pi\nu k\tau_{\text{}p}}$.}. The cross-ambiguity between DD domain signals $y_{\text{dd}}(\tau,\nu)$ and $x_{\text{dd}}(\tau,\nu)$ is 
given by \cite{zak_otfs_radar}
\begin{eqnarray}
\label{eq6}
\hspace{-1mm}
    A_{y,x}(\tau,\nu) & \hspace{-2mm} = & \hspace{-2mm} \int y(t)x^{*}(t-\tau)e^{-j2\pi\nu(t-\tau)} dt \nonumber \\ & \hspace{-34mm} \overset{(a)}{=} & \hspace{-19mm}  \int_{0}^{\tau_{\text{p}}} \hspace{-1mm} \int_{0}^{\nu_{\text{p}}} y_{\text{dd}}(\tau',\nu')x_{\text{dd}}^{*}(\tau'-\tau,\nu'-\nu)e^{-j2\pi\nu(\tau'-\tau)}d\tau'd\nu' \nonumber \\&\hspace{-34mm}\overset{(b)}{=}& \hspace{-18mm} h_{\mathrm{s}}(\tau,\nu)*_{\sigma}A_{x,x}(\tau,\nu) + A_{n,x}(\tau,\nu) \nonumber \\
    & \hspace{-34mm} \overset{(c)}{=} & \hspace{-18mm} \sum_{i =1}^{P'}h_{i}A_{x,x}(\tau - \tau_{i}, \nu-\nu_{i})e^{j2\pi\nu_{i}(\tau-\tau_{i})} + A_{n,x}(\tau,\nu),
\end{eqnarray}
where step (a) follows from Eq. (10.9) in \cite{zak_otfs3}, 
step (b) follows from Theorem 10.1 in \cite{zak_otfs3},
and step (c) follows from substituting (\ref{eq4}) in step (b).
Here, $A_{x,x}(\tau,\nu)$ denotes the self-ambiguity function of the probing waveform $x(t)$, given by
\begin{eqnarray}
\label{eq:self_amb_t}
A_{x,x}(\tau,\nu) & \hspace{-2mm} = & \hspace{-2mm} \int x(t)x^{*}(t-\tau)e^{-j2\pi\nu(t-\tau)}dt  \\ \nonumber &\hspace{-34mm}=&\hspace{-19mm} \int_{0}^{\tau_{\text{p}}} \hspace{-1mm} \int_{0}^{\nu_{\text{p}}} \hspace{-1mm} x_{\text{dd}}(\tau',\nu')x_{\text{dd}}^{*}(\tau'-\tau,\nu'-\nu)e^{-j2\pi\nu(\tau'-\tau)}d\tau'd\nu',     
\end{eqnarray}
which is further simplified by substituting (\ref{eq3}) and following Theorem 10.4 in \cite{zak_otfs3} to obtain the self-ambiguity expression in (\ref{eq:self_amb_DD}) given at the top of the next page.
\begin{figure*}[t]
\begin{eqnarray}
    \label{eq:self_amb_DD}
  A_{x,x}(\tau,\nu) 
  & = &  \sum_{n \in \mathbb{Z}} \sum_{m\in \mathbb{Z}} w_{\text{tx}}(\tau,\nu) *_{\sigma}\left(e^{j2\pi m\nu_{\text{p}}\tau}w_{\text{tx}}^{*}( n\tau_{\text{p}} - \tau, m\nu_{\text{p}}-\nu)e^{j2\pi(\tau-n\tau_{\text{p}})(\nu-m\nu_{\text{p}})}\right).
\end{eqnarray}
\hrule
\end{figure*}
$A_{n,x}(\tau,\nu)$ represents the cross-ambiguity between $n(t)$ and $x(t)$, given by 
\begin{eqnarray}
\label{eq:Anx}
A_{n,x}(\tau,\nu) = \int n(t)x^{*}(t-\tau)e^{-j2\pi\nu(t-\tau)}dt.   
\end{eqnarray}
The covariance of noise in the cross-ambiguity domain is given by (see Appendix \ref{app:derive_noise_cov} for the derivation) 
\begin{eqnarray}
\label{eq:noise_cov_amb}
\mathbb{E}[A_{n,x}(\tau_{1},\nu_{1})A^{*}_{n,x}(\tau_{2},\nu_{2})] \nonumber\\& \hspace{-28mm}=  N_{0}e^{-j2\pi\nu_{2}(\tau_{2}-\tau_{1})}A_{x,x}(\tau_{1}-\tau_{2},\nu_{1}-\nu_{2}).
\end{eqnarray}

The cross-ambiguity $A_{y,x}(\tau,\nu)$ is sampled on a DD domain lattice finer than the delay and Doppler resolution. Specifically, a discrete cross-ambiguity function is obtained as $A_{y,x}[k,l] = A_{y,x}(\tau = \frac{k}{PB},\nu = \frac{l}{QT})$, where $k,l \in \mathbb{Z}$, and $P$ and $Q$ are positive integer constants representing the oversampling factors along the delay and Doppler domains, respectively. The delay and Doppler parameters of the targets can be estimated through peak detection in the discrete cross-ambiguity function, i.e., 
\begin{equation} 
(\hat\tau_{i},\hat\nu_{i}) = \left(\frac{\hat k_{i}}{PB},\frac{\hat l_{i}}{QT}\right), 
\end{equation}
where $(\hat k_{i}, \hat l_{i})$, $i = 1,2,\cdots,P'$, denotes the estimated peak locations in $A_{y,x}[k,l]$ corresponding to the $P'$ targets. The corresponding estimated ranges and velocities of the targets are given by 
\begin{equation}
(\hat r_{i}, \hat v_{i}) = \left(\frac{c \hat \tau_{i}}{2}, \frac{c \hat \nu_{i}}{2f_{c}}\right),
\end{equation} 
where $c$ is the speed of light and $f_{c}$ is the carrier frequency.

{\em Choice of $\tau_{\mathrm{p}}$ and $\nu_{\mathrm{p}}$}:  The self-ambiguity function $A_{x,x}(\tau,\nu)$ is supported on the period lattice $\Lambda_{\text{p}}$, given by ${\Lambda}_{\text{p}} \triangleq \left\{(n\tau_{\text{p}}, m\nu_{\text{p}})| n,m \in \mathbb{Z}\right\}$ as shown in Fig. \ref{fig:heatmaps_self_amb}. In the cross-ambiguity function $A_{y,x}(\tau,\nu)$, the lattice points other than $(0,0)$ act as DD aliases which can result in ghost targets while detecting the peaks. To prevent the DD aliases interference in the detection of target peaks corresponding to lattice point $(0,0)$, the delay and Doppler periods are chosen so that the aliases are separated well, i.e., $\tau_{\text{p}} > (\tau_{\text{max}}-\tau_{\text{min}})$ and $\nu_{\text{p}} > (\nu_{\text{max}}-\nu_{\text{min}})$, which is referred to as the {\em crystallization condition} for radar sensing using Zak-OTFS \cite{zak_otfs_radar}. Here, $\tau_{\text{min}}$ and $\tau_{\text{max}}$ are the minimum and maximum delays, and $\nu_{\text{min}}$ and $\nu_{\text{max}}$ are the minimum and maximum Dopplers, of the targets.

{\em Peak detection DD window}: The crystallization condition ensures separation of peaks corresponding to ghost and real targets. Consequently, this enables the choice of an appropriate DD window to detect the peaks corresponding to all the real targets. To achieve this, the delay spread of the peak detection window should lie between $(\tau_{\text{max}} - \tau_{\text{min}})$ and $\tau_{\text{p}}$. Similarly, the Doppler spread of the window should lie between $(\nu_{\text{max}} - \nu_{\text{min}})$ and $\nu_{\text{p}}$. Specifically, we define the peak detection window by the co-ordinates $\{ (\tau'_{\text{min}}, \nu'_{\text{min}}), (\tau'_{\text{max}}, \nu'_{\text{min}}),(\tau'_{\text{min}}, \nu'_{\text{max}}),(\tau'_{\text{max}}, \nu'_{\text{max}})\}$. Here, $\tau'_{\text{max}}$ and $\nu'_{\text{max}}$ are chosen to be greater than $\tau_{\text{max}}$ and $\nu_{\text{max}}$, by some delay and Doppler margins, respectively. These margins are chosen to account for the DD spreading of the self-ambiguity (which depends on the pulse shaping filter used). Similarly, $\tau'_{\text{min}}$ and $\nu'_{\text{min}}$ are chosen to be lesser than $\tau_{\text{min}}$ and $\nu_{\text{min}}$, by the same delay and Doppler margins, respectively.

{\em Implementation}: In practical implementation of the receiver (Algorithm 1 in \cite{zak_otfs_radar}), the Zak-transforms of the time domain echo signal and the probing waveform are obtained for discrete DD grid points in the fundamental region $\mathcal{D}_{0}$ using fast Fourier transform (FFT). The discrete cross-ambiguity between the echo and probing signals is obtained in a discrete peak detection DD window using Riemann sum as an approximation for continuous integral, and subsequently the peak location are identified to give range/velocity estimation.

\subsection{DD pulse shaping filters} 
We consider pulse shaping filters of the form $w_{\text{tx}}(\tau,\nu) = w_{1}(\tau)w_{2}(\nu)$, for which the probing waveform is given by \cite{zak_otfs3},\cite{zak_otfs_radar}
\begin{eqnarray}
  x(t) = Z_{t}^{-1}(x_{\text{dd}}(\tau,\nu)) = w_{1}(t)  \star  {\big(W_{2}(t) \ p(t)\big)},   
\end{eqnarray}
where $\star$ denotes linear convolution,
$p(t)$ is the point pulsone in the time domain, given by 
\begin{equation}
p(t) = Z_{t}^{-1}(x_{\text{p,dd}}(\tau,\nu)) = \sqrt{\tau_{\text{p}}} \sum_{n\in\mathbb{Z}}\delta(t-n\tau_{\text{p}}), 
\end{equation}
and $W_2(t) \triangleq \int w_{2}(\nu)e^{j2\pi\nu t}d\nu$. To limit the duration of the pulsone to $T$, the duration of $W_2(t)$ is limited to $T$. Similarly, the bandwidth of the pulsone is limited to $B$ by limiting the bandwidth of $w_{1}(t)$ to $B$. We consider sinc, Gaussian and Gaussian-sinc (GS) pulse shaping filters as described below.

{\em Sinc filter:} The  
sinc pulse shaping filter is given by
\begin{eqnarray}
\label{eq:wtx_sinc}
   w_{\text{tx}}(\tau,\nu) = \sqrt{B}\text{sinc}(B\tau)\sqrt{T}\text{sinc}(T\nu). 
\end{eqnarray}
For the sinc filter, the time duration of probing waveform $T'$ and bandwidth of the waveform $B'$ are $T$ and $B$, i.e., no time and bandwidth expansion is incurred. 

{\em Gaussian filter:} The  
Gaussian 
filter is given by  
\begin{eqnarray}
\label{eq:wtx_gauss}
  w_\text{tx}(\tau,\nu) \hspace{-0.5mm} = \hspace{-0.5mm} {\left(\frac{2\alpha_{\tau}B^2}{\pi}\right)^{\frac{1}{4}}\hspace{-0.8mm}e^{-\alpha_{\tau}B^{2}\tau^{2}}} \ {\left(\frac{2\alpha_{\nu}T^2}{\pi}\right)^{\frac{1}{4}}\hspace{-0.8mm}e^{-\alpha_{\nu}T^{2}\nu^{2}}},  
\end{eqnarray}
where $\alpha_{\tau}$ and $\alpha_{\nu}$ are adjustable parameters. As the Gaussian pulse has infinite support, the time duration $T'$ and bandwidth $B'$ are defined by ${99}\%$ energy containment. In order to have no time/bandwidth expansion, i.e., $T' = T, \ B' = B$, the parameters are set to $\alpha_{\tau} = \alpha_{\nu} = 1.584$. 

{\em Gaussian-sinc filter:} The 
GS filter is given by
\begin{equation}
\label{eq:wtx_GS}
   w_\text{tx}(\tau,\nu)\hspace{-0mm}=\hspace{-0mm}\Omega_\tau\Omega_\nu\sqrt{BT} \mathrm{sinc}(B\tau)\mathrm{sinc}(T\nu) e^{-\alpha_{\tau}B^{2}\tau^{2}} \hspace{-1mm} e^{-\alpha_{\nu}T^{2}\nu^{2}}\hspace{-1mm}. 
\end{equation}
The roll-off parameters are set to $\alpha_{\tau} = \alpha_{\nu} = 0.044$ for no time/bandwidth expansion and the energy normalization parameters are $\Omega_{\tau} = \Omega_{\nu} = 1.0278$ \cite{gs}. 

\section{Self-ambiguity function for different filters} 
\label{sec:selfamb}
In this section, we study the self-ambiguity functions of the Zak-OTFS waveform obtained using different pulse shaping filters. Towards this, we 1) present heatmaps of the self-ambiguity functions of sinc, Gaussian, and GS filters, highlighting the supporting lattice structure and the spread of the ambiguity volume, and 2) derive closed-form expressions for the self-ambiguity functions for these filters. 

\begin{figure*}[t]
\centering
\includegraphics[width = 16cm, height = 4cm]{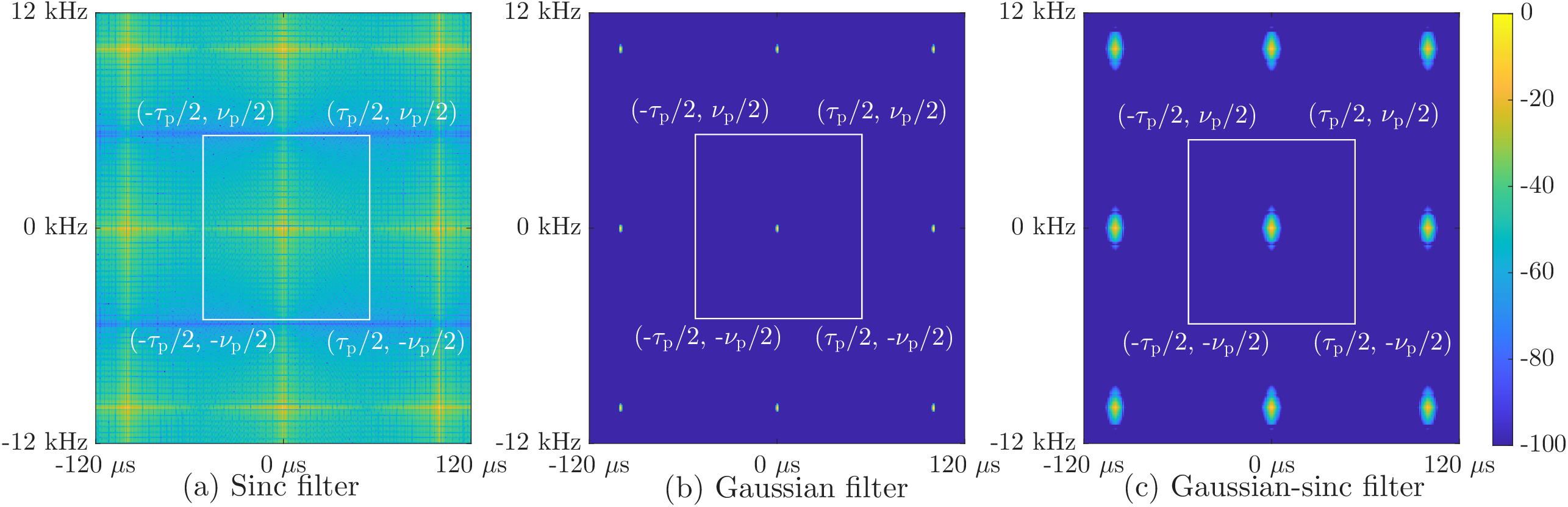}
\caption{Heatmaps of the self-ambiguity functions of Zak-OTFS waveform filtered through sinc, Gaussian, and GS filters. }
\label{fig:heatmaps_self_amb}
\end{figure*}

\begin{figure}
\centering
\includegraphics[width = 8cm, height = 5.3cm]{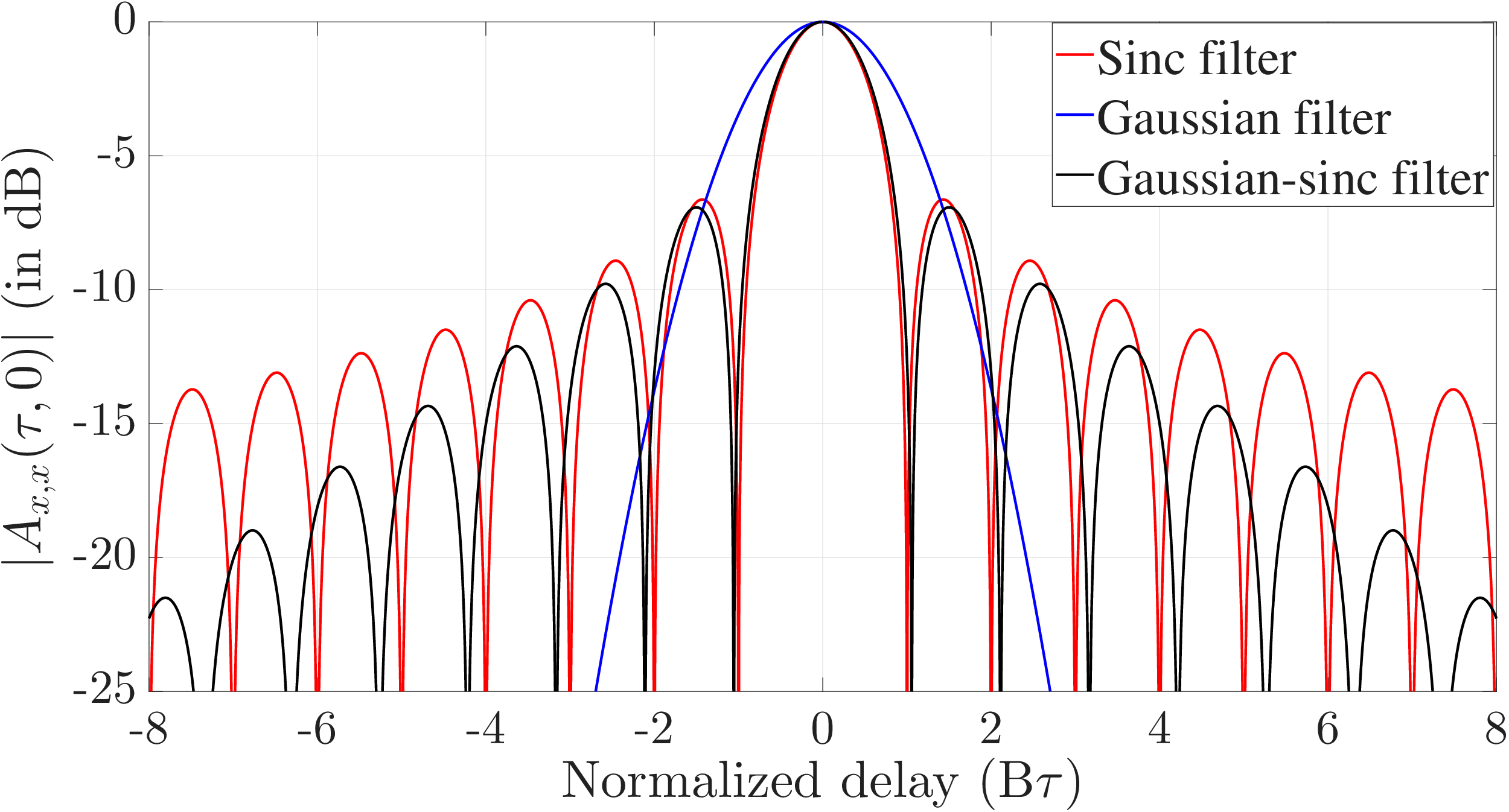}
\caption{Zero-Doppler cut of magnitude of self-ambiguity functions of Zak-OTFS waveform for different filters. }
\label{fig:zero_dop_cut_self_amb}
\end{figure}

Figure \ref{fig:heatmaps_self_amb} shows the heatmaps of the self-ambiguity functions corresponding to different filters. We consider a delay period of $\tau_{\text{p}} = 100$ $\mu s$, and so the Doppler period $\nu_{\text{p}} = 1/\tau_{\text{p}} = 10$ kHz. From Fig. \ref{fig:heatmaps_self_amb}, we observe that the self-ambiguity functions are supported on the period lattice with lattice points at $( n\tau_{\text{p}}, m\nu_{\text{p}})$ for $n,m \in \mathbb{Z}$, and the volume spread around these lattice points is influenced by the pulse shaping filter. For the sinc filter, the volume around the lattice points is widely spread along the delay and Doppler dimensions. In contrast, the spread is very much localized around the lattice points in the case of the Gaussian filter. For the GS filter, the spread is more contained than that of the sinc filter but less than that of the Gaussian filter.

In Fig. \ref{fig:zero_dop_cut_self_amb}, we plot the magnitude of the zero-Doppler cut of the self-ambiguity functions, i.e., $|A_{x,x}(\tau, \nu = 0)|$ in dB scale, corresponding to different filters as a function of normalized delay ($B \tau$) around $\tau = 0$. The ambiguity for the Gaussian filtered waveform has a wide main lobe, i.e., at $-25$ dB point, the main lobe width is 5.39 delay bins, each bin of size $1/B$, and almost no discernible sidelobes. For the sinc filter, the ambiguity function has a narrow main lobe but has high side lobe levels. Specifically, the main lobe width is 1.99 delay bins  at $-25$ dB point, the PSLR is -6.65 dB, and the ISLR is 2.31 dB in the delay range $[-\tau_{\text{p}}/2,\tau_{\text{p}}/2]$ as indicated by the white box in the Fig. \ref{fig:heatmaps_self_amb}. For the GS filter, the ambiguity function has a narrow main lobe and a PSLR similar to those of the sinc filter, but the ISLR is much less than that of the sinc filter. Specifically, the main lobe width is 2.12 delay bins at -25 dB point, the PSLR is -6.92 dB, and the ISLR is -3.12 dB in the delay range of $[-\tau_{\text{p}}/2,\tau_{\text{p}}/2]$. The above characteristics of the main lobe and sidelobes for different filters are summarized in Table \ref{tab2}. Similar characteristics can be observed for the zero-delay cut of the self-ambiguity functions.  

\subsection{Closed-form expressions}  
In this subsection, we derive closed-form expressions for the self-ambiguity functions of Zak-OTFS waveform with Gaussian and GS filters (Theorems 1 and 2), which have not been reported in the literature. For the sinc filter, a closed-form expression has been reported in \cite{zak_otfs3} (Ch. 10, Th. 10.5), which we reproduce here (Theorem 3) for immediate reference purposes.   

\begin{theorem}
\label{thm:self_amb_gauss}
    The closed-form expression for the self-ambiguity function of Gaussian filtered Zak-OTFS waveform is given by
    \begin{eqnarray}
        A_{x,x}(\tau,\nu) = \sum_{n \in \mathbb{Z}} \sum_{m \in \mathbb{Z}} \left(e^{-\alpha_\tau B^{2} t_{0}^{2}} \ e^{j2\pi\nu t_{0}} \ e^{2\alpha_{\tau} B^{2}K_{1}^{2}}\right) \nonumber\\ \left( e^{-\alpha_{\nu}T^{2}f_{0}^{2}} \ e^{\alpha_{\nu}T^{2}\frac{K_{2}^{2}}{2}}\right),
    \end{eqnarray}
    where $t_{0} = \tau-n\tau_{\text{p}}, \ f_{0} = \nu-m\nu_{\text{p}}, \ K_{1} = \left( \frac{j2\pi\nu-2\alpha_{\tau}B^{2}t_{0}}{4\alpha_{\tau}B^{2}} \right)^{2}$ and $K_{2} = \left( f_{0} + \frac{j2\pi n \tau_{\text{p}}}{2\alpha_{\nu}T^{2}}\right) $.
\end{theorem}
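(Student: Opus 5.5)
We start from the general expression (\ref{eq:self_amb_DD}) and substitute the separable Gaussian filter (\ref{eq:wtx_gauss}). For each fixed $(n,m)$, the summand is a twisted convolution of $w_{\text{tx}}(\tau,\nu)$ with
\[
g_{n,m}(\tau,\nu)=e^{j2\pi m\nu_{\text{p}}\tau}\,w_{\text{tx}}^{*}(n\tau_{\text{p}}-\tau,m\nu_{\text{p}}-\nu)\,e^{j2\pi(\tau-n\tau_{\text{p}})(\nu-m\nu_{\text{p}})}.
\]
Using the footnote definition of $*_\sigma$, this is the double integral
\[
\int\!\!\int w_{\text{tx}}(\tau',\nu')\,g_{n,m}(\tau-\tau',\nu-\nu')\,e^{j2\pi\nu'(\tau-\tau')}\,d\tau' d\nu'.
\]
Since $w_{\text{tx}}$ is real and even, $w_{\text{tx}}^{*}(n\tau_{\text{p}}-\tau+\tau',m\nu_{\text{p}}-\nu+\nu') = w_1(\tau'-t_0)\,w_2(\nu'-f_0)$, with $t_0=\tau-n\tau_{\text{p}}$ and $f_0=\nu-m\nu_{\text{p}}$. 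So the integrand is a product of Gaussians in $\tau'$ and $\nu'$, multiplied by a phase that is at most bilinear in $(\tau',\nu')$.

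The plan is to expand all phase factors and collect terms. The phase factors are $e^{j2\pi m\nu_{\text{p}}(\tau-\tau')}$, $e^{j2\pi(t_0-\tau')(f_0-\nu')}$ and $e^{j2\pi\nu'(\tau-\tau')}$. Their product contains the cross term $e^{j2\pi\tau'\nu'}$ from the second factor and $e^{-j2\pi\nu'\tau'}$ from the third, and these cancel. This cancellation is the key structural step: it makes the double integral separate into a $\tau'$-integral and a $\nu'$-integral. After cancellation the remaining phase is linear in each variable:
\[
e^{j2\pi m\nu_{\text{p}}\tau}\,e^{j2\pi t_0 f_0}\,e^{-j2\pi\tau'(m\nu_{\text{p}}+f_0)}\,e^{j2\pi\nu'(\tau-t_0)}
= e^{j2\pi m\nu_{\text{p}}\tau}\,e^{j2\pi t_0f_0}\,e^{-j2\pi\nu\tau'}\,e^{j2\pi n\tau_{\text{p}}\nu'}.
\]
Using $m\nu_{\text{p}}\tau_{\text{p}}n\in\mathbb{Z}$, the constant phases simplify to $e^{j2\pi\nu t_0}$.

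Each one-dimensional integral is then a Gaussian integral of the form
\[
\int e^{-a x^{2}+bx+c}\,dx=\sqrt{\pi/a}\;e^{b^{2}/(4a)+c}.
\]

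In the delay integral, $e^{-\alpha_\tau B^{2}[\tau'^{2}+(\tau'-t_0)^{2}]}e^{-j2\pi\nu\tau'}$ has $a=2\alpha_\tau B^{2}$. Completing the square should produce $e^{-\alpha_\tau B^2t_0^2}$ together with an exponential of the form $e^{2\alpha_\tau B^2(\cdot)^2}$ involving $(2\alpha_\tau B^{2}t_0-j2\pi\nu)/(4\alpha_\tau B^{2})$, which is the $K_1$ term. The prefactor $\sqrt{2\alpha_\tau B^2/\pi}\cdot\sqrt{\pi/(2\alpha_\tau B^2)}=1$ confirms the normalization.

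In the Doppler integral, $e^{-\alpha_\nu T^{2}[\nu'^{2}+(\nu'-f_0)^{2}]}e^{j2\pi n\tau_{\text{p}}\nu'}$ has linear coefficient $2\alpha_\nu T^2 f_0+j2\pi n\tau_{\text{p}} = 2\alpha_\nu T^2 K_2$. This gives $e^{-2\alpha_\nu T^2f_0^2}e^{\alpha_\nu T^2K_2^2/2}$, which I will reconcile with the stated $e^{-\alpha_\nu T^2 f_0^2}$ form by careful bookkeeping.

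Summing over $(n,m)$ then yields the theorem.

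I expect the main obstacle to be the exact bookkeeping of exponents, not the method. This covers tracking which constant phases survive, matching the statement's definitions of $K_1$ and $K_2$ (including the apparent squaring inside $K_1$), and confirming that the leftover $e^{j2\pi t_0 f_0}$-type phases combine into the displayed $e^{j2\pi\nu t_0}$. I would do this by carrying the $n=m=0$ term fully and checking that it gives $A_{x,x}(0,0)=1$ (unit energy). I would then verify the general term against a numerical evaluation of (\ref{eq:self_amb_DD}).
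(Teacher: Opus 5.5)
Your proposal follows the paper's proof. Realness and evenness of the Gaussian $w_{\text{tx}}$ reduce each summand of (\ref{eq:self_amb_DD}) to a product of a delay and a Doppler Gaussian integral, as in (\ref{eq:self_amb_sep}), and each integral is then evaluated by completing the square. The paper states (\ref{eq:self_amb_sep}) without derivation; your cancellation of the $\tau'\nu'$ phase terms is what actually justifies it.

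Neither bookkeeping issue you flag is a real discrepancy:

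\textbf{Doppler factor.} The constant term in the Doppler exponent is $-\alpha_\nu T^2 f_0^2$, not $-2\alpha_\nu T^2 f_0^2$. It is the same computation as your correct $-\alpha_\tau B^2 t_0^2$ in the delay integral. So your formula gives exactly the stated $e^{-\alpha_\nu T^2 f_0^2}e^{\alpha_\nu T^2 K_2^2/2}$. Your planned check at the origin has $f_0=0$, so it could not have caught this slip.

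\textbf{$K_1$.} The extra square in the statement's definition of $K_1$ is a typo. The paper's own (\ref{eq:gauss_I1_4}) has $e^{2\alpha_\tau B^2\kappa^2}$ with $\kappa=(j2\pi\nu-2\alpha_\tau B^2 t_0)/(4\alpha_\tau B^2)$ unsquared, which is what your computation produces.
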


\begin{IEEEproof}
See Appendix \ref{app:proof_self_amb_gauss}.
\end{IEEEproof}

\begin{theorem}
\label{thm:self_amb_GS}
    The closed-form expression for the self-ambiguity function of GS filtered Zak-OTFS waveform is given by
    \begin{eqnarray}
        A_{x,x}(\tau,\nu) \hspace{-2mm}&= \hspace{-2mm}& \sum_{n \in \mathbb{Z}} \sum_{m \in \mathbb{Z}} \left( I_{1,1}(\tau,\nu) \mathbb{I}_{\{t_{0}\neq0\}}  +  I_{1,2}(\tau,\nu)\mathbb{I}_{\{t_{0}=0\}}\right) \nonumber \\&&\hspace{3mm}  \left(I_{2,1}(\tau,\nu) \mathbb{I}_{\{f_{0\neq0}\}}  +  I_{2,2}(\tau,\nu)\mathbb{I}_{\{f_{0}= 0\}}\right),
    \end{eqnarray}
   where $t_{0} = \tau-n\tau_{\text{p}}, \ f_{0} = \nu-m\nu_{\text{p}} \ $. The functions $I_{1,1}(\tau,\nu)$, $I_{1,2}(\tau,\nu)$, $I_{2,1}(\tau,\nu)$ and $I_{2,2}(\tau,\nu)$ are defined as follows:
   \begin{eqnarray}
   \hspace{-10mm}
       I_{1,1}(\tau,\nu) & \hspace{-2mm} = & \hspace{-2mm} P_{1}e^{-\frac{\pi^{2}\nu^{2}}{2\alpha_{\tau}B^{2}}}e^{j\pi\nu t_{0}} \nonumber\\
       &&\hspace{-16mm}\begin{split}
       \left[ e^{j\pi B t_{0}}e^{\frac{\pi^2 K_{1}^2}{2\alpha_{\tau}B^{2}}}g_{1}\left(\frac{\pi^2}{2\alpha_\tau B^{2}}, B+K_{1}, K_{1}\right)\right.\\
       - \left. e^{-j\pi B t_{0}} e^{\frac{\pi^2 K_2^{2}}{2\alpha_\tau B^2}} g_{1}\left(\frac{\pi^2}{2\alpha_\tau B^2}, B+K_{2}, K_{2}\right) \right.\\
       \left. +  e^{j\pi B t_{0}}e^{\frac{\pi^2K_{2}^2}{2\alpha_\tau B^2}}g_1\left(\frac{\pi^2}{2\alpha_\tau B^2}, K_2, -B+K_2\right)\right.\\ \left. - e^{-j\pi B t_{0}}e^{\frac{\pi^2K_1^2}{2\alpha_{\tau} B^2}}g_1\left(\frac{\pi^2}{2\alpha_{\tau}B^2}, K_1, -B+K_1\right)\right], 
       \end{split}
   \end{eqnarray}
   where $t_{0} = \tau-n\tau_{\text{p}}$, $P_{1} = \frac{\Omega_{\tau}^2}{j2\pi t_{0} B} \sqrt{\frac{\pi}{2\alpha_{\tau}B^2}}e^{-\frac{\alpha_\tau B^2 t_{0}^2}{2}}$, $K_{1} = \frac{\alpha_{\tau}B^2}{\pi^2}\left(\frac{\pi^2 \nu}{\alpha_{\tau} B^2} + j\pi t_{0}\right)$, $K_{2} = \frac{\alpha_{\tau}B^2}{\pi^2}\left(\frac{\pi^2 \nu}{\alpha_{\tau} B^2} - j\pi t_{0}\right)$, and the function $g_1(.)$ is defined as $g_1(a,t,s) = \int_{s}^{t} e^{-ax^2}dx = \frac{1}{\sqrt{a}}\left[\frac{\sqrt{\pi}}{2}\left(erf(\sqrt{a } \ t )-erf(\sqrt{a}\ s)\right)\right]$, where $a,t,s$ are complex numbers and erf(.) is the error function.

 \begin{eqnarray}
 \hspace{-8mm}
   I_{2,1}(\tau,\nu) & \hspace{-2mm} = & \hspace{-2mm} \frac{\Omega_{\nu}^2}{j2\pi f_{0}T}\sqrt{\frac{\pi}{2\alpha_{\nu}T^2}}e^{-\frac{\alpha_{\nu}T^2 f_{0}^2}{2}} \nonumber \\
   && \hspace{-16mm}
   \begin{split}
       \left[ e^{j2\pi f_{0}\left(\frac{T}{2} + n\tau_{\text{p}}\right)} f\left(T+n\tau_{\text{p}}, n\tau_{\text{p}}, \pi f_{0}, \frac{\pi^2}{2\alpha_{\nu}T^2}\right)\right. \\ 
       \left. - e^{-j2\pi f_{0}\left(\frac{T}{2}\right)}f\left(T+ n\tau_{\text{p}}, n\tau_{\text{p}}, -\pi f_{0}, \frac{\pi^2}{2\alpha_{\nu}T^2}\right)\right.\\
       \left. - e^{j2\pi f_{0}\left(-\frac{T}{2} + n \tau_{\text{p}}\right)}f\left(n\tau_{\text{p}}, -T + n\tau_{\text{p}}, \pi f_{0}, \frac{\pi^2}{2\alpha_{\nu}T^2} \right) \right.\\
       \left. + e^{j2\pi f_{0} \left(\frac{T}{2}\right)}f\left( n\tau_{\text{p}}, -T + n\tau_{\text{p}}, -\pi f_{0}, \frac{\pi^2}{2\alpha_{\nu}T^2}\right)\right], 
   \end{split}
   \end{eqnarray}
    where the function $f(.)$ is defined as $f(t,s,z,a) = \int_{s}^{t}e^{-ax^2}e^{-jzx}dx = \frac{\sqrt{\pi}}{2}\frac{e^{-z^2/4a}}{\sqrt{a}}\left[erf\left(\sqrt{a} \ t + j\frac{z}{2\sqrt{a}}\right)-erf\left(\sqrt{a} \ s + j\frac{z}{2\sqrt{a}}\right)\right]$~for complex numbers $t,s,z,a$.
   
   \begin{eqnarray}
   \hspace{-6mm}
       I_{1,2}(\tau,\nu) & \hspace{-2mm} = & \hspace{-2mm} \frac{\Omega_{\tau}^2}{B}\sqrt{\frac{\pi}{2\alpha_\tau B^2}} \nonumber\\ & & \hspace{-18mm}
       \begin{split}       
        \left[ g_2\left(\frac{\pi^2}{2\alpha_\tau B^2}, \nu, -B+\nu\right)
        -g_2\left(\frac{\pi^2}{2\alpha_\tau B^2}, B+\nu,\nu\right) \right.\\
        \left.+(B+\nu)g_{1}\left(\frac{\pi^2}{2\alpha_\tau B^2}, B+\nu,\nu\right)\right.\\
        \left.+ (B-\nu)g_1\left(\frac{\pi^2}{2\alpha_\tau B^2}, \nu,-B+\nu\right)\right],     
        \end{split}
   \end{eqnarray}
   where the functions $g_{1}(.)$ and $g_{2}(.)$ are~defined as $g_1(a,t,s) = \int_{s}^{t} e^{-ax^2}dx = \frac{1}{\sqrt{a}}\left[\frac{\sqrt{\pi}}{2}\left(erf(\sqrt{a } \ t )-erf(\sqrt{a}\ s)\right)\right]$~and $g_{2}(a,t,s) = \int_{s}^{t}xe^{-\frac{\pi^2 x^2}{2 \alpha_{\tau} B^2}}dx = \frac{\alpha_\tau B^2}{\pi^2}\left(e^{-\frac{\pi^2}{2\alpha_\tau B^2}s^2}- e^{-\frac{\pi^2}{2\alpha_\tau B^2}t^2}\right)$ for complex numbers $a,t,s$.
   \begin{eqnarray}
   \hspace{-4mm}
       I_{2,2}(\tau,\nu) &=& \frac{\Omega_{\nu}^2}{T}\sqrt{\frac{\pi}{2\alpha_\nu T^2}} \nonumber \\ && \hspace{-20mm}
       \begin{split}
         \left[ g_{2}\left(\frac{\pi^2}{2\alpha_{\nu}T^2}, n\tau_{\text{p}}, -T + n\tau_{\text{p}}\right)\right. \\
           \left. - g_{2}\left(\frac{\pi^2}{2\alpha_{\nu}T^2}, T + n\tau_{\text{p}}, n\tau_{\text{p}}\right)\right.\\
            \left. +(T+ n\tau_{\text{p}})g_{1}\left(\frac{\pi^2}{2\alpha_{\nu}T^2}, T + n\tau_{\text{p}}, n\tau_{\text{p}}\right)\right.\\
           \left. +\left(T-n\tau_{\text{p}}\right)g_{1}\left(\frac{\pi^2}{2\alpha_{\nu}T^2}, n\tau_{\text{p}}, -T+n\tau_{\text{p}}\right)\right],         
       \end{split}
   \end{eqnarray}
    where the functions $g_{1}(.)$ and $g_{2}(.)$ are defined as the same as in the case of $I_{1,2}(\tau,\nu)$.
\end{theorem}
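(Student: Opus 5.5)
The plan is to start from the lattice-sum representation (\ref{eq:self_amb_DD}) and exploit the separable form of the GS filter. Since $w_{\text{tx}}(\tau,\nu)=w_1(\tau)w_2(\nu)$ with $w_1(\tau)=\Omega_\tau\sqrt{B}\,\mathrm{sinc}(B\tau)e^{-\alpha_\tau B^2\tau^2}$ and $w_2(\nu)=\Omega_\nu\sqrt{T}\,\mathrm{sinc}(T\nu)e^{-\alpha_\nu T^2\nu^2}$, each $(n,m)$ term of (\ref{eq:self_amb_DD}) is a twisted convolution. Write it out with the footnote definition of $*_\sigma$. Substitute $t_0=\tau-n\tau_{\text{p}}$ and $f_0=\nu-m\nu_{\text{p}}$, and collect all phase factors; the factor $e^{j2\pi m\nu_{\text{p}}\tau}$ is trivial modulo the lattice relation $\tau_{\text{p}}\nu_{\text{p}}=1$ after combination. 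The double integral should then factor as
\[
\int w_1(\tau')w_1^*(\tau'-t_0)e^{-j2\pi(\cdot)\tau'}d\tau'\cdot\int w_2(\nu')w_2^*(\nu'-f_0)e^{j2\pi n\tau_{\text{p}}\nu'}(\cdot)\,d\nu'.
\]
These are a delay-domain integral $I_1$ and a Doppler-domain integral $I_2$, exactly the pattern used for the Gaussian case in Theorem~\ref{thm:self_amb_gauss}, where the delay factor carries the frequency $\nu$ and the Doppler factor carries the time shift $n\tau_{\text{p}}$. I would first redo this factorization carefully, since the phase bookkeeping determines which linear terms appear.

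Next I would evaluate each one-dimensional integral by moving to the dual domain. In $I_1$, the product $\mathrm{sinc}(B\tau')\mathrm{sinc}(B(\tau'-t_0))$ is handled with the identity $\mathrm{sinc}(Bx)\mathrm{sinc}(B(x-t_0))=\frac{\sin(\pi Bx)\sin(\pi B(x-t_0))}{\pi^2B^2x(x-t_0)}$. I would use the partial fraction $\frac{1}{x(x-t_0)}=\frac{1}{t_0}\left(\frac{1}{x-t_0}-\frac1x\right)$, valid for $t_0\neq0$, which produces the prefactor $1/(j2\pi t_0B)$ in $P_1$. A cleaner route is Parseval: the Fourier transform of $\mathrm{sinc}(B\tau)$ is a rectangle of width $B$, and that of a Gaussian is a Gaussian. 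The Gaussian product $e^{-\alpha_\tau B^2(\tau'^2+(\tau'-t_0)^2)}$ completes to $e^{-\alpha_\tau B^2t_0^2/2}e^{-2\alpha_\tau B^2(\tau'-t_0/2)^2}$, giving the factor $e^{-\alpha_\tau B^2t_0^2/2}$. Its transform is a Gaussian with exponent $\pi^2/(2\alpha_\tau B^2)$, matching the $a=\pi^2/(2\alpha_\tau B^2)$ argument of $g_1$.

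Each sinc then becomes an integral of a complex exponential over $[-B/2,B/2]$. Interchanging the order of integration, the inner integral is a Gaussian integral over a finite interval with complex shift. Completing the square produces the shifts $K_1,K_2=\frac{\alpha_\tau B^2}{\pi^2}\left(\frac{\pi^2\nu}{\alpha_\tau B^2}\pm j\pi t_0\right)$, and the finite limits $\pm B+K_i$ and $K_i$ give the four erf-differences $g_1$. The factor $e^{-\pi^2\nu^2/(2\alpha_\tau B^2)}e^{j\pi\nu t_0}$ comes from recentering the Gaussian at $t_0/2$. The Doppler factor $I_2$ is treated identically with the roles swapped: $T$ replaces $B$ and the time shift $n\tau_{\text{p}}$ appears in the limits. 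The result is expressed through $f(t,s,z,a)$, an erf with a complex offset from the residual linear phase $e^{\mp j\pi f_0x}$, giving the four terms of $I_{2,1}$.

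The degenerate cases $t_0=0$ and $f_0=0$ must be done separately, because the partial-fraction step divides by $t_0$ (resp. $f_0$). There the sinc product is a squared sinc, whose transform is a triangle $(B-|\xi|)$ rather than a difference of rectangles. The triangle splits into two linear pieces over $[-B,0]$ and $[0,B]$, and integrating against the Gaussian gives a $g_2$ term (the $x$ part) plus a constant times $g_1$. This yields $I_{1,2}$ and $I_{2,2}$, and the indicator functions in the statement encode this case split. The main obstacle I expect is the phase bookkeeping in the factorization and in the complex completions of the square: the signs of the $j\pi t_0$ and $j2\pi n\tau_{\text{p}}$ terms and the exponential prefactors in each of the four erf terms must match the stated $K_1,K_2$. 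I would check these first at $n=m=0$, $\tau=\nu=0$, where $A_{x,x}(0,0)$ must equal the unit energy, which fixes $\Omega_\tau,\Omega_\nu$ consistently.
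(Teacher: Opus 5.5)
Your proposal is correct and follows essentially the same route as the paper. You factor each lattice term into a delay integral $I_1$ and a Doppler integral $I_2$; the stray phase $e^{j2\pi m\nu_{\text{p}}\tau}$ collapses to $e^{j2\pi nm}=1$, as you say. You then apply Parseval, pairing the piecewise transform of the sinc product on $(0,B)$ and $(-B,0)$ (a triangle when $t_0=0$) with the Gaussian transform, and complete the square to obtain the $K_1,K_2$-shifted $g_1$ terms, with $g_1,g_2$ in the degenerate case. Swapping roles gives the $f(\cdot)$ terms of $I_{2,1}$ and the expression for $I_{2,2}$. The only cosmetic difference is that you derive the transform of the sinc product by partial fractions, as a difference of shifted, phase-modulated rectangles, whereas the paper writes $X^*(f)$ down directly; the two give the same function.
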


\begin{IEEEproof}
See Appendix \ref{app:proof_self_amb_GS}.
\end{IEEEproof}

The closed-form expression for the self-ambiguity function of sinc filtered Zak-OTFS waveform is given in \cite{zak_otfs3} (Ch.10 Th. 10.5), which is presented here for immediate reference purposes. 
\begin{theorem}
The closed-form expression for the self-ambiguity function of sinc filtered Zak-OTFS waveform is given as follows: For even $N$, 
\begin{eqnarray}
        A_{x,x}(\tau,\nu) = \frac{(1-\frac{|\nu|}{B})}{N}\sum_{n=-\frac{N}{2}}^{\frac{N}{2}-1}\sum_{m=-\frac{N}{2}}^{\frac{N}{2}-1}[e^{j\pi\nu(\tau - (n-m)\tau_{\text{p}})}\nonumber\\\mathrm{sinc}((B-|\nu|)(\tau+(n-m)\tau_{\text{p}}))]{\large\mathbb{I}_{|\nu|<B}},
\end{eqnarray}
where $\mathbb{I}$ is the indicator function. For odd $N$, the summation for both $n$ and $m$ variables starts from $-(N-1)/2$ and ends at $(N-1)/2$.
\end{theorem}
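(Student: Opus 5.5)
This is Theorem 10.5 of \cite{zak_otfs3}, so the plan is to rederive it from the time-domain definition (\ref{eq:self_amb_t}) rather than from the DD lattice sum (\ref{eq:self_amb_DD}). For the separable sinc filter (\ref{eq:wtx_sinc}), the Doppler factor $W_2(t)=\int \sqrt{T}\mathrm{sinc}(T\nu)e^{j2\pi\nu t}d\nu$ is a rectangle of duration $T$, up to the scale $1/\sqrt{T}$. Hence $W_2(t)p(t)$ keeps exactly $N$ impulses of the pulsone $p(t)$. These sit at $t=n\tau_{\text{p}}$ with $n=-N/2,\dots,N/2-1$ for even $N$, or the symmetric range for odd $N$, so the overall scale is $\sqrt{\tau_{\text{p}}/T}=1/\sqrt{N}$.

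The first step is to write the probing waveform as
\begin{equation*}
x(t)=\frac{1}{\sqrt{N}}\sum_{n}\sqrt{B}\,\mathrm{sinc}(B(t-n\tau_{\text{p}})),
\end{equation*}
a sum of $N$ delayed sinc pulses, each strictly bandlimited to $[-B/2,B/2]$. Substituting this into (\ref{eq:self_amb_t}) gives a double sum over $(n,m)$ of cross-ambiguities $A_{g_n,g_m}(\tau,\nu)$ between shifted copies of $g(t)=\sqrt{B}\mathrm{sinc}(Bt)$. A change of variables $t\to t+n\tau_{\text{p}}$ reduces each term to the basic sinc self-ambiguity $A_{g,g}(\tau+(n-m)\tau_{\text{p}}\cdot(\pm1),\nu)$ times a phase of the form $e^{-j2\pi\nu(\cdot)\tau_{\text{p}}}$. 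I will track the signs so that they match the stated argument $\tau+(n-m)\tau_{\text{p}}$ and the phase $e^{j\pi\nu(\tau-(n-m)\tau_{\text{p}})}$; relabeling $n\leftrightarrow m$ is allowed because the index set is symmetric.

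The second step is the closed-form single-pulse ambiguity, computed in the frequency domain via Parseval. The spectrum of $g$ is $G(f)=\frac{1}{\sqrt{B}}\mathbb{I}_{|f|<B/2}$, so $\int G(f)G^*(f-\nu)e^{j2\pi f\tau}df$, with the appropriate phase convention, integrates over the overlap interval of two width-$B$ rectangles offset by $\nu$. That interval has length $B-|\nu|$ and is centered at $\nu/2$. The integral is therefore
\begin{equation*}
\Bigl(1-\tfrac{|\nu|}{B}\Bigr)\mathrm{sinc}\bigl((B-|\nu|)\tau\bigr)e^{j\pi\nu\tau}\,\mathbb{I}_{|\nu|<B},
\end{equation*}
up to a convention-dependent phase. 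This supplies the prefactor $(1-|\nu|/B)$, the indicator, and the half-frequency phase $e^{j\pi\nu(\cdot)}$.

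The final step combines the two. The $1/N$ comes from the two $1/\sqrt{N}$ factors, and the inner-shift phases from step one merge with the $e^{j\pi\nu\tau'}$ from step two to give $e^{j\pi\nu(\tau-(n-m)\tau_{\text{p}})}$.

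I expect the main obstacle to be bookkeeping of phases and signs. The convention $e^{-j2\pi\nu(t-\tau)}$ in (\ref{eq:self_amb_t}) differs from the symmetric one, and the sign of $(n-m)$ in the sinc argument must be consistent with the sign in the exponential. I would fix these by checking the special case $N=1$, $\nu=0$, where the result must reduce to $\mathrm{sinc}(B\tau)$. A second check is that the single-pulse computation reproduces $A_{x,x}(0,0)=1$ (unit energy). If I instead started from (\ref{eq:self_amb_DD}), I would need to collapse the Poisson-summed lattice terms into the finite sum, which is why I prefer the time-domain route.
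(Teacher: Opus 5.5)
The paper gives no derivation of its own; it defers to Appendix 10.K of \cite{zak_otfs3}. Your first two steps are sound. You have $x(t)=\frac{1}{\sqrt{N}}\sum_{n}g(t-n\tau_{\text{p}})$ with $g(t)=\sqrt{B}\,\mathrm{sinc}(Bt)$. Under the paper's convention, $A_{g,g}(\tau,\nu)=(1-\frac{|\nu|}{B})e^{j\pi\nu\tau}\mathrm{sinc}((B-|\nu|)\tau)\mathbb{I}_{|\nu|<B}$ exactly, with no extra phase.

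\textbf{The gap is the final step: the phases do not merge into $e^{j\pi\nu(\tau-(n-m)\tau_{\text{p}})}$.} Label $x(t)$ by $m$ and $x^{*}(t-\tau)$ by $n$, and substitute $s=t-m\tau_{\text{p}}$. Since $t-\tau=(s-\tau-(n-m)\tau_{\text{p}})+n\tau_{\text{p}}$, the $(n,m)$ term is $\frac{1}{N}e^{-j2\pi\nu n\tau_{\text{p}}}A_{g,g}(\tau+(n-m)\tau_{\text{p}},\nu)$. Its total phase is therefore $e^{-j2\pi\nu n\tau_{\text{p}}}e^{j\pi\nu(\tau+(n-m)\tau_{\text{p}})}=e^{j\pi\nu(\tau-(n+m)\tau_{\text{p}})}$. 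What your argument actually proves is
\[
A_{x,x}(\tau,\nu)=\frac{1-\frac{|\nu|}{B}}{N}\sum_{n}\sum_{m}e^{j\pi\nu(\tau-(n+m)\tau_{\text{p}})}\,\mathrm{sinc}\bigl((B-|\nu|)(\tau+(n-m)\tau_{\text{p}})\bigr)\,\mathbb{I}_{|\nu|<B}.
\]
The factor $e^{-j\pi\nu(n+m)\tau_{\text{p}}}$ is not a function of $n-m$. So no sign convention or relabeling of indices turns this into the displayed expression, where every summand depends on $(n,m)$ only through $n-m$.

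This is not a bookkeeping artifact; the two sums really differ. Take $\tau=0$, $\nu=1/T$, $N\ge 2$, where $A_{x,x}(0,\nu)=\int|x(t)|^{2}e^{-j2\pi\nu t}dt$.
In the formula above:
\begin{itemize}
\item the diagonal terms sum to $\frac{1-|\nu|/B}{N}\sum_{n}e^{-j2\pi n/N}=0$;
\item each off-diagonal term contains $\mathrm{sinc}((M-\frac{1}{N})k)$ with $k=n-m\neq 0$, of magnitude at most $\frac{1}{NM-1}$.
\end{itemize}
Hence $|A_{x,x}(0,1/T)|<\frac{1}{M}$, as it must be for a waveform of duration $T$.

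The displayed formula instead gets $1-\frac{1}{BT}$ from its diagonal alone, plus an off-diagonal correction below $\frac{1}{M}$. At $\tau=0$ it has essentially no Doppler resolution. So the statement as printed appears to contain a transcription error: $n-m$ where $n+m$ belongs in the exponent. Your plan to ``track the signs so that they match'' cannot succeed.

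Your proposed checks would not reveal this:
\begin{itemize}
\item for $N=1$ the two expressions coincide;
\item at $\nu=0$ all phases equal $1$, and the two sinc sums agree under $k\to-k$ because $k=n-m$ occurs with multiplicity $N-|k|$.
\end{itemize}
You need a check with $\nu\neq 0$ and $N\ge 2$.

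A smaller point: for even $N$, $W_{2}(t)$, the inverse Fourier transform of $\sqrt{T}\,\mathrm{sinc}(T\nu)$, equals $\frac{1}{2\sqrt{T}}$ at $t=\pm T/2$. Strictly, the impulses at $n=\pm N/2$ therefore carry half weight. The equal-weight index set $\{-N/2,\dots,N/2-1\}$ needs an explicitly stated half-open truncation convention.
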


\begin{IEEEproof}
See Appendix 10.K in \cite{zak_otfs3}.
\end{IEEEproof}

{\em Remark on simulation run time:} The closed-form expressions derived in the above help in significantly accelerating the simulation run times. For example, generation of $100$ realizations of the self-ambiguity functions in a DD window of $[-\tau_{\text{p}}/2, \tau_{\text{p}}/2] \times[-\nu_{\text{p}}/2, \nu_{\text{p}}/2]$ (as indicated in Fig. \ref{fig:heatmaps_self_amb}) for sinc, Gaussian and GS filters using numerical integration took $12.62$s, $1502.8$s, and $8269.1$s, respectively. Whereas, using the closed-form expressions, the run times were observed to be $13.2$s, $19.2$s, and $842.8$s for sinc, Gaussian and GS filters, respectively\footnote{The simulations are run on a PC with a 13th Gen Intel Core i7-13700 processor (16 cores, 24 threads) and 31.1 GiB of RAM, using MATLAB R2025b.}. That is, simulation speed ups by factors of about 80 and 10 for Gaussian and GS filters, respectively, are achieved by the closed-form expressions.

\section{Effect of pulse shaping filters and ITI Mitigation}
\label{sec:effect}
In this section, we illustrate the difference in the sensing performance of different filtered Zak-OTFS waveforms due to the different characteristics of their self-ambiguity functions. We consider range/velocity estimation performance in an illustrative experiment with two targets in the high SNR regime and evaluate the performance as a function of separation between the two targets. We also demonstrate that an inter-target interference (ITI) mitigation scheme applied on the cross-ambiguity output enhances the achieved performance.

\begin{figure}
\centering
\includegraphics[width = 7.5cm, height = 6cm]{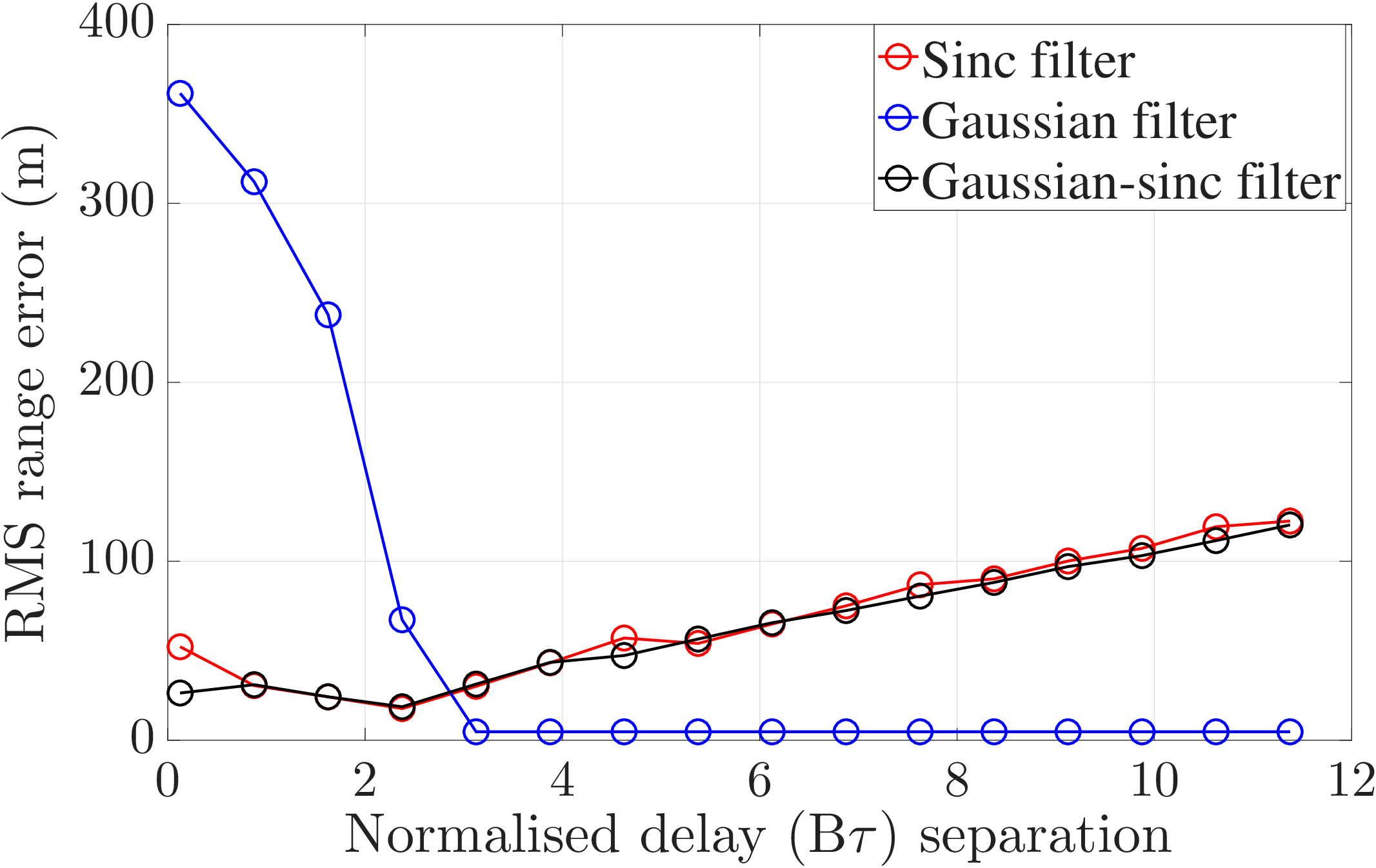}
\caption{Range estimation error vs normalized delay separation between the two targets in high SNR regime.}
\label{fig:exp_range_err}
\end{figure}

\subsection{An illustrative two-target experiment}
To illustrate the effect of pulse shaping on sensing performance, we consider a scene consisting of two targets separated in the delay domain, i.e., $\tau_{1}\neq\tau_{2}$ but sharing the same Doppler shifts, i.e., $\nu_{1}=\nu_2=0$ Hz, in the high SNR regime, where $(\tau_1,\nu_1)$ and $(\tau_2,\nu_2)$ are the DDs of targets 1 and 2, respectively. We take bandwidth $B=4$ MHz, time duration $T = 20$ ms, delay period $\tau_{\text{p}} = 100\ \mu s$, carrier frequency $f_{c} = 1$ GHz, and oversampling factors $P=Q=4$. We fix the delay shift of the first target at $\tau_1 = 0.25$ $\mu\text{s}$ and vary the delay shift of the second target $\tau_2$. The peak detection DD window is considered to be $[0, 5]$ $\mu$s $\times$ $[-1000, 1000]$ Hz. In Fig. \ref{fig:exp_range_err}, we plot the range estimation error for the two targets obtained for sinc, Gaussian, and GS filters as a function of the separation between the targets. We observe that when the targets are close, the Gaussian filtered waveform gives high range estimation errors in contrast to the sinc and GS filtered waveforms, which give much better accuracy in estimation. On the other hand, when the targets are well separated, the sinc and GS filtered waveforms give higher estimation errors compared to that of the Gaussian filtered waveform. This behavior in estimation performance is attributed to the respective self-ambiguity functions which is illustrated via the heatmaps in Fig. \ref{fig:heatmaps_toyeg}. The Gaussian filtered waveform has a broader peak/wide main lobe in its self-ambiguity function, and hence the targets cannot be distinguished as different peaks when they are close to each other, as shown in Fig. \ref{fig:heatmaps_toyeg}(b). For sinc and GS filters, the ambiguity functions have narrower peaks (with significant energy leaked into the side lobes) and therefore the close targets can be distinguished as different peaks. In the case of well separated targets, due to high side lobes in the ambiguity functions of sinc and GS filtered waveforms, the weaker target (target 2) is affected  by the side lobe of the stronger target (target 1). Consequently, it cannot be detected as a separate peak as shown in Figs. \ref{fig:heatmaps_toyeg}(a) and (c), respectively, and the accuracy in estimation suffers. For the Gaussian filtered waveform, the weaker target (target 2) is identified as a peak because of very low side lobes associated to the stronger target (target 1).

\begin{figure*}[t]
\centering
\includegraphics[width = 16cm, height = 4cm]{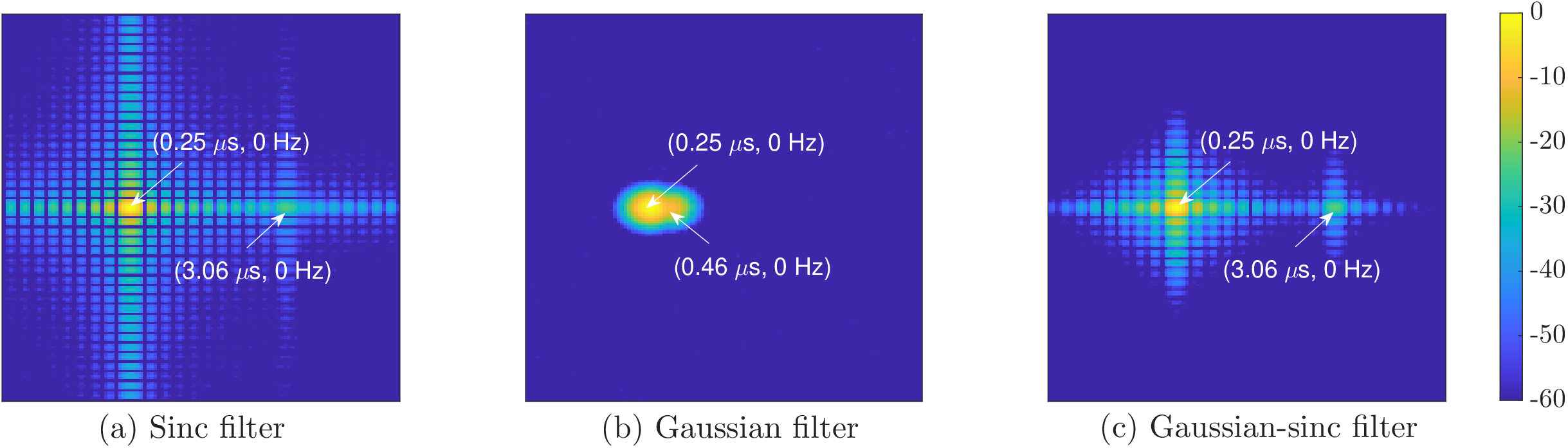}
\caption{Heatmaps illustrating the effect of sinc, Gaussian, and GS filters on the sensing performance in a two-target scene.} 
\label{fig:heatmaps_toyeg}
\end{figure*}

\subsection{Inter-target interference mitigation}
The sensing performance achieved by the basic processing of the cross-ambiguity (peak detection) can be further enhanced by additional DD signal processing. Towards this, we recognize that while the narrow main lobe is a positive attribute of the sinc and GS filters, which makes them good for dense target distribution (compared to Gaussian filter), the high side lobes are detrimental, and if the sidelobe effects are mitigated properly, their performance can be made good for a more general target distribution. With this motivation, we consider an ITI mitigation scheme, which works as follows. 

\begin{figure*}[t]
\begin{eqnarray}
\label{eq:estim_h1}
   \langle A_{y,x}(\tau,\nu), A_{x,x}(\tau -\hat\tau_1,\nu - \hat\nu_1)e^{j2\pi\hat\nu_1(\tau-\hat\tau_1)} \rangle &=& \int_{\tau'}\int_{\nu'}A_{y,x}(\tau',\nu')A^{*}_{x,x}(\tau'-\tau_1,\nu'-\nu_1)e^{-j2\pi\hat\nu_1(\tau-\hat\tau_1)}d\tau'd\nu' \nonumber \\
   & \hspace{-10cm}\overset{(a)}{=} & \hspace{-5cm} h_1 \underbrace{\int_{\tau'}\int_{\nu'} A_{x,x}(\tau'-\tau_1,\nu'-\nu_1)A^{*}_{x,x}(\tau'-\hat\tau_1,\nu'-\hat\nu_1)e^{j2\pi\nu_1(\tau-\tau_1)}e^{-j2\pi\hat\nu_1(\tau-\hat\tau_1)}d\tau'd\nu'}_{\text{1. Inner product of ambiguity functions for same target.}} \nonumber \\
   & \hspace{-10cm} & \hspace{-5cm} + \ h_2\underbrace{\int_{\tau'}\int_{\nu'}  A_{x,x}(\tau'-\tau_2,\nu'-\nu_2)A^{*}_{x,x}(\tau'-\hat\tau_1,\nu'-\hat\nu_1)e^{j2\pi\nu_2(\tau-\tau_2)}e^{-j2\pi\hat\nu_1(\tau-\hat\tau_1)}d\tau'd\nu'}_{\text{2. Inner product of ambiguity functions for different targets.}} \nonumber \\
   & \hspace{-10cm} & \hspace{-5cm} + \
   \underbrace{\int_{\tau'}\int_{\nu'} A_{n,x}(\tau',\nu')A^{*}_{x,x}(\tau'-\hat\tau_1,\nu'-\hat\nu_1)e^{-j2\pi\hat\nu_1(\tau-\hat\tau_1)}d\tau'd\nu'}_{\text{3. Noise term}}.
\end{eqnarray}
\hrule
\end{figure*}

Let $x(t)$ denote the unit energy Zak-OTFS pulsone  transmitted as the probing waveform. Consider a radar scene consisting of two targets with $h_{\text{s}}(\tau,\nu) = h_{1}\delta(\tau-\tau_1)\delta(\nu-\nu_1) + h_{2}\delta(\tau-\tau_2)\delta(\nu-\nu_2)$. At the receiver, the cross-ambiguity function $A_{y,x}(\tau,\nu)$ is given by $A_{y,x}(\tau,\nu) = \sum_{i=1}^2h_{i}A_{x,x}(\tau-\tau_{i},\nu-\nu_i)e^{j2\pi\nu_{i}(\tau-\tau_{i})} + A_{n,x}(\tau,\nu)$. Let the DD location of the maximum value of $|A_{y,x}(\tau,\nu)|^2$, representing the delay and Doppler estimate of the closest/strongest target at $(\tau_1,\nu_1)$ be $(\hat\tau_{1},\hat\nu_1)$. To estimate $h_{1}$, we compute the inner product of the cross-ambiguity $A_{y,x}(\tau,\nu)$ and the estimated ambiguity for the first target in the cross-ambiguity expression, i.e., $A_{x,x}(\tau -\hat\tau_1,\nu - \hat\nu_1)e^{j2\pi\hat\nu_1(\tau-\hat\tau_1)}$, as in (\ref{eq:estim_h1}) given at the top of the next page. In (\ref{eq:estim_h1}), where step (a) follows from substituting the $A_{y,x}(\tau,\nu)$ expression. We observe that the first term in step (a) in (\ref{eq:estim_h1}), i.e., the inner product of ambiguity functions for the same target, is 1 if $\hat \tau_1 = \tau_1 $ and $\hat \nu_1 = \nu_1 $ due to the fact that $\int_{\tau'}\int_{\nu'}|A_{x,x}(\tau',\nu')|^2d\tau' d\nu' = E^2 = 1$. The second term, i.e., the inner product of ambiguity functions for different targets, is approximately 0 if the targets are well separated and the support of the self-ambiguity functions centered at the target locations do not overlap significantly. Assuming that these conditions are satisfied approximately, we can estimate $h_1$ as 
\begin{eqnarray}
\hspace{-3mm}
\hat h_{1} & \hspace{-2mm} = & \hspace{-2mm} \langle A_{y,x}(\tau,\nu), A_{x,x}(\tau -\hat\tau_1,\nu - \hat\nu_1)e^{j2\pi\hat\nu_1(\tau-\hat\tau_1)} \rangle.   
\end{eqnarray}
We reconstruct the contribution of the first target in the cross-ambiguity function, i.e., $\hat A_{y,x,1} = \hat h_{1} A_{x,x}(\tau-\hat \tau_1, \nu-\hat \nu_1)e^{j2\pi\hat\nu_1(\tau-\hat\tau_1)}$ and subtract it from $A_{y,x}(\tau,\nu)$  to obtain an ITI canceled version, i.e., 
\begin{equation}
A_{y,x,\text{tc}}(\tau,\nu) = A_{y,x}(\tau,\nu)- \hat h_1A_{x,x}(\tau-\hat \tau_1, \nu-\hat \nu_1)e^{j2\pi\hat\nu_1(\tau-\hat\tau_1)}. 
\end{equation}
From the obtained $A_{y,x,\text{tc}}(\tau,\nu)$, we find the location of the peak $(\hat \tau_2, \hat \nu_2)$ which corresponds to the estimate of the delay and Doppler shift of the second target. This procedure can be extended for more than two targets. 

{\em Processing of the sampled DD domain signal:} The continuous DD cross-ambiguity function $A_{y,x}(\tau,\nu)$ is sampled on a DD domain lattice to obtain the discrete cross-ambiguity function as $A_{y,x}[k,l] = A_{y,x}(\tau = \frac{k}{PB},\nu =~\frac{l}{QT})$,~$k,l \in~\mathbb{Z}$, and $P$ and $Q$ are the oversampling factors along delay and Doppler, respectively, for further processing to estimate ranges/velocities of the targets. To limit the computational complexity of processing and prevent the identification of ghost targets due to DD aliasing, we consider the $A_{y,x}[k,l]$ in a limited discrete peak detection DD window given by 
\begin{eqnarray}
\mathcal{S} & \hspace{-2mm} \triangleq & \hspace{-2mm} \big\{(k,l) \ | \  k,l \in \mathbb{Z}, P\lfloor B\tau'_{\text{min}} \rfloor \leq k \leq P \lceil B\tau'_{\text{max}} \rceil, \nonumber \\
& \hspace{-2mm} & \hspace{-2mm} Q\lfloor T\nu'_{\text{min}} \rfloor \leq l \leq Q \lceil T\nu'_{\text{max}} \rceil \big\},
\end{eqnarray}
where $\tau'_{\text{max}}$ and $\nu'_{\text{max}}$ are chosen to be greater than the maximum delay $\tau_{\text{max}}$ and the maximum Doppler $\nu_{\text{max}}$ of the targets, by some delay and Doppler margins, respectively. The margins are chosen to account for the DD spreading of the self-ambiguity. Whereas, $\tau'_{\text{min}}$ and $\nu'_{\text{min}}$ are chosen to be lesser than $\tau_{\text{min}}$ and $\nu_{\text{min}}$, by the same delay and Doppler margins, respectively. 

The interference mitigation scheme is applied on the sampled DD function $A_{y,x}[k,l]$ in the DD window $\mathcal{S}$ via discrete DD signal processing as follows. The location of the peak $(\hat k_1, \hat l_1)$ in the over-sampled cross-ambiguity function gives the estimate of delay and Doppler shift parameters of the first (closest/strongest) target, i.e., $(\hat \tau_1 = \frac{\hat k_1}{PB}, \hat \nu_1 = \frac{\hat l_1}{QT})$. We calculate the volume under the squared discrete self-ambiguity function centered at the estimated first target location, in the DD window $\mathcal{S}$ as 
\begin{equation}
\hspace{-2mm}
V = \frac{1}{k_\text{s}l_\text{s}} \sum_{k'= P\lfloor B \tau'_{\text{min}}\rfloor}^{P\lceil B \tau'_{\text{max}}\rceil}\sum_{l'= Q\lfloor T \nu'_{\text{min}}\rfloor}^{Q\lceil T \nu'_{\text{max}}\rceil}\left|A_{x,x}[k'-\hat k_1, l'-\hat l_1]\right|^2\hspace{-2mm},
\end{equation}
where $k_{\text{s}}$ is the discrete delay spread over $\mathcal{S}$, given by $k_\text{s} = P\lceil B \tau'_{\text{max}}\rceil- P\lfloor B \tau'_{\text{min}}\rfloor +1$ and $l_{\text{s}}$ is the discrete Doppler spread over $\mathcal{S}$, given by $l_{\text{s}} = Q\lceil T \nu'_{\text{max}}\rceil -  Q\lfloor T \nu'_{\text{min}}\rfloor +1$. The estimate of $h_{1}$ is obtained as the inner product of the discrete cross-ambiguity function and the discrete self-ambiguity function centered at the estimated location of the first target over the window $\mathcal{S}$, normalized by the volume $V$, i.e., 
\begin{eqnarray}
\hspace{-5mm}\hat h_{1} &\hspace{-2mm}= &\hspace{-2mm} \frac{1}{k_{\text{s}}l_{\text{s}}V}\sum_{k'= P\lfloor B \tau'_{\text{min}}\rfloor}^{P\lceil B \tau'_{\text{max}}\rceil}\sum_{l'= Q\lfloor T \nu'_{\text{min}}\rfloor}^{Q\lceil T \nu'_{\text{max}}\rceil}A_{y,x}[{k'},{l'}] \nonumber\\
  &\hspace{-2mm}&\hspace{-2mm}A_{x,x}^{*}[{k'-\hat k_1},{l'-\hat l_1}]e^{-j2\pi\frac{\hat l_{1}(k'-\hat k_1)}{PQMN}}. 
\end{eqnarray}
We reconstruct the contribution of the first target and subtract it from the cross-ambiguity function to obtain 
the ITI canceled version, i.e., 
\begin{eqnarray}
\hspace{-6mm}
A_{y,x,\text{tc}}[k,l] & \hspace{-2mm} = & \hspace{-2mm} A_{y,x}[k,l] \nonumber \\
& \hspace{-2mm} & \hspace{-2mm} - \hat h_{1}A_{x,x}[k-\hat k_1,l-\hat l_1]e^{j2\pi}\frac{\hat l_1(k-\hat k_1)}{PQMN}.
\end{eqnarray}
The location of the peak in $A_{y,x,\text{tc}}[k,l]$ gives the estimate of delay and Doppler shift parameters of the second target. This procedure can be extended for more than 2 targets.

{\em Heatmaps without and with ITI mitigation:} 
In Fig. \ref{fig:heatmap_iti}, we demonstrate the effectiveness of the ITI mitigation scheme described above in the two-target scene via heatmaps of the cross-ambiguity functions obtained before and after ITI cancellation. For this illustration, we consider the scenarios where Gaussian and sinc filtered waveforms perform poorly (without ITI cancellation) in terms of estimation accuracy. Accordingly, for the Gaussian filtered waveform, we consider the case of {\em closely spaced targets}, where the performance is poor without ITI mitigation. Figures \ref{fig:heatmap_iti}(a) and (b) show the cross-ambiguity function heatmaps for the Gaussian filter before and after ITI mitigation, respectively. Due to the wide main lobe of the ambiguity function, the targets are not identified as separate peaks without ITI mitigation (Fig. \ref{fig:heatmap_iti}(a)). However, once the contribution of the stronger target is estimated and removed from the cross-ambiguity, the weaker target could be identified as a separate peak  (Fig. \ref{fig:heatmap_iti}(b)). Likewise, for the sinc filtered waveform, we consider the case of {\em well separated targets}, where the performance is poor without ITI mitigation. Figures \ref{fig:heatmap_iti}(c) and (d) show the cross-ambiguity function heatmaps for the sinc filter before and after ITI mitigation, respectively. Here, due to the influence of the high side lobes associated with the stronger target, the peak corresponding to the weaker target is not identified without ITI mitigation (Fig.\ref{fig:heatmap_iti}(c)). However, with ITI cancellation, the weaker target could be identified as a separate peak (Fig. \ref{fig:heatmap_iti}(d)). Similar improvement is observed in the case of well separated targets for GS filter as well. While the above observations from the two-target experiment could reveal some of the key effects at play in determining sensing performance, a more detailed set of simulation results obtained by averaging over random realizations of target locations in a multi-target scenario is presented in the following section.

\begin{figure}
\label{fig:heatmap_iti}
\centering
\includegraphics[width = 8.5cm, height = 6.5cm]{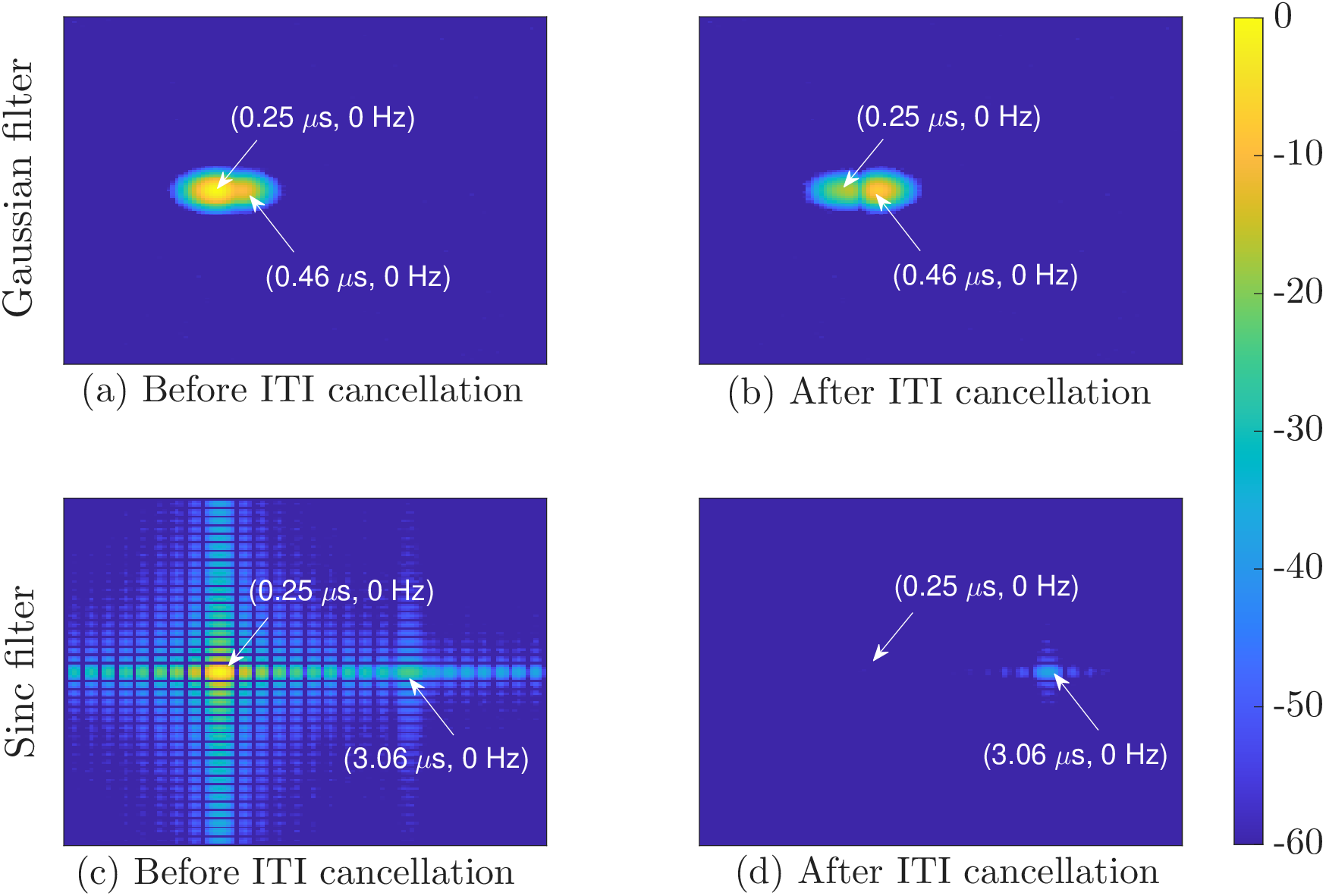}
\caption{Heatmaps illustrating the effectiveness of inter-target interference mitigation. }
\label{fig:heatmap_iti}
\end{figure}

\section{Results and Discussions}
\label{sec:results}
In this section, we present the detection and estimation performance of Zak-OTFS waveforms filtered through different DD pulse shaping filters (sinc, Gaussian, and GS filters). Two different types of radar scenes, namely, scenes with {\em densely populated targets} and scenes with {\em sparsely populated targets}, are considered. In the simulations, we consider bandwidth $B = 4$ MHz, time duration $T = 20$ ms, delay period $\tau_{\text{p}} = 100$ $\mu$s, Doppler period $\nu_{\text{p}} = 1/\tau_{\text{p}} = 10$ kHz, oversampling factors $P=Q=4$, and carrier frequency $f_{c} = 1$ GHz. While the Zak-OTFS waveform is applicable for sensing irrespective of bandwidth and time duration, the choice of $B = 4$ MHz and $T = 20$ ms can be applicable for surveillance radars \cite{zak_otfs_radar}. We consider scenes with four targets, i.e., $h_{\text{s}}(\tau,\nu) = \sum_{i =1}^{4}h_{i}\delta(\tau-\tau_{i})\delta(\nu-\nu_i)$. The fade $h_{i}$ is a random variable with complex normal distribution with zero mean and variance $\mathbb{E}[|h_{i}|^2] = (10^{-7}/(\tau_{i}))^4$. The minimum delay of the targets is  $\tau_{\text{min}} = 200$ $\mu$s corresponding to a distance of $30$ km. The four targets are uniformly distributed in DD widows $\Omega_{\text{d}}$ and $\Omega_{\text{s}}$ for densely and sparsely populated targets scenes, respectively. The DD window $\Omega_{\text{d}}$ for the scene with densely populated targets is taken as $\Omega_{\text{d}} =[200,201]$ $\mu$s  $\times$  $[-200, 200]$ Hz. The DD window $\Omega_{\text{s}}$ for the scene with sparsely populated targets is taken as $\Omega_{\text{s}} =[200,205]$ $\mu$s  $\times$  $[-1000, 1000]$ Hz. The signal to noise ratio (SNR) considered in the simulation for both scenes is the SNR for a target at the minimum delay $\tau_{\text{min}}$, i.e., $\text{SNR} = \left(\frac{10^{-7}}{\tau_{\text{min}}}\right)^4\left(\frac{E_{\text{p}}}{ BTN_{0}}\right)$, where $E_{\text{p}}$ is the energy of the probing waveform, taken to be $1$ and $N_{0}$ is the variance of AWGN. For a target at delay $\tau > \tau_{\text{min}}$, the average received SNR is $(10^{-7}/(\tau-\tau_{\text{min}}))^4$ times the SNR for a target at delay $\tau_{\text{min}}$ \cite{zak_otfs_radar}.

{\em ROC and range/velocity estimation accuracy:} To illustrate detection performance, we plot the probability of detection ($P_{D}$) as a function of the probability of false alarm ($P_{F}$), i.e., ROC curves, for different filters at a particular SNR. We define the target present hypothesis ($H_{1}$) for the $i$th target as $|A_{y,x}(\tau_{i},\nu_{i})|^2$ and the target absent hypothesis ($H_{0}$) as $|\sum_{j\neq i, j=1}^{j = 4}h_{j}A_{x,x}(\tau_{i}-\tau_{j},\nu_{i}-\nu_{j})e^{j2\pi \nu_{j}(\tau_{i}-\tau_{j})} + A_{n,x}(\tau_{i},\nu_{i})|^2$. We generate $5\times 10^{5}$ Monte Carlo instances of $H_{1}$ and $H_{0}$ for each of the targets. For a particular threshold $\gamma$, $P_{D}$ is the fraction of times $H_1 > \gamma$ averaged over all targets, and $P_{F}$ is the fraction of times $H_{0}>\gamma$ averaged over all the targets. To illustrate estimation performance, we plot the root mean square (RMS) range and velocity estimation errors of the targets as a function of SNR. 

\subsection{Scenes with densely populated targets}
Here, we consider densely populated scenes with four targets uniformly distributed in the DD window $\Omega_{\text{d}}$. Corresponding to this window, we consider the peak detection DD window as $[200,203]$ $\mu$s  $\times$  $[-600, 600]$ Hz. In Fig. \ref{fig:dense_ROC}, we plot $P_{D}$ versus $P_{F}$ at an SNR of $-9$ dB for sinc, Gaussian, and GS filters. We observe that for a particular $P_{F}$, the Gaussian filtered waveform gives a worse $P_{D}$ compared to those of the sinc and GS filtered waveforms. This behavior is attributed to the wide main lobe in its self-ambiguity function, resulting in a broader peak associated with each of the targets in the cross-ambiguity domain. As interference from other targets is significant due to the wide main lobe, the number of false detections is high, resulting in worse detection performance.  On the other hand, the self-ambiguity functions of the sinc and GS filtered waveforms have narrow main lobes, resulting in less interference due to other targets, enabling better performance compared to that of the Gaussian filtered waveform. Also, since both sinc and GS filtered waveforms have similar PSLRs, the number of false detections for both are nearly equal, resulting in similar detection performance.

\begin{figure}
\centering
\includegraphics[width = 7.5cm, height = 6cm]{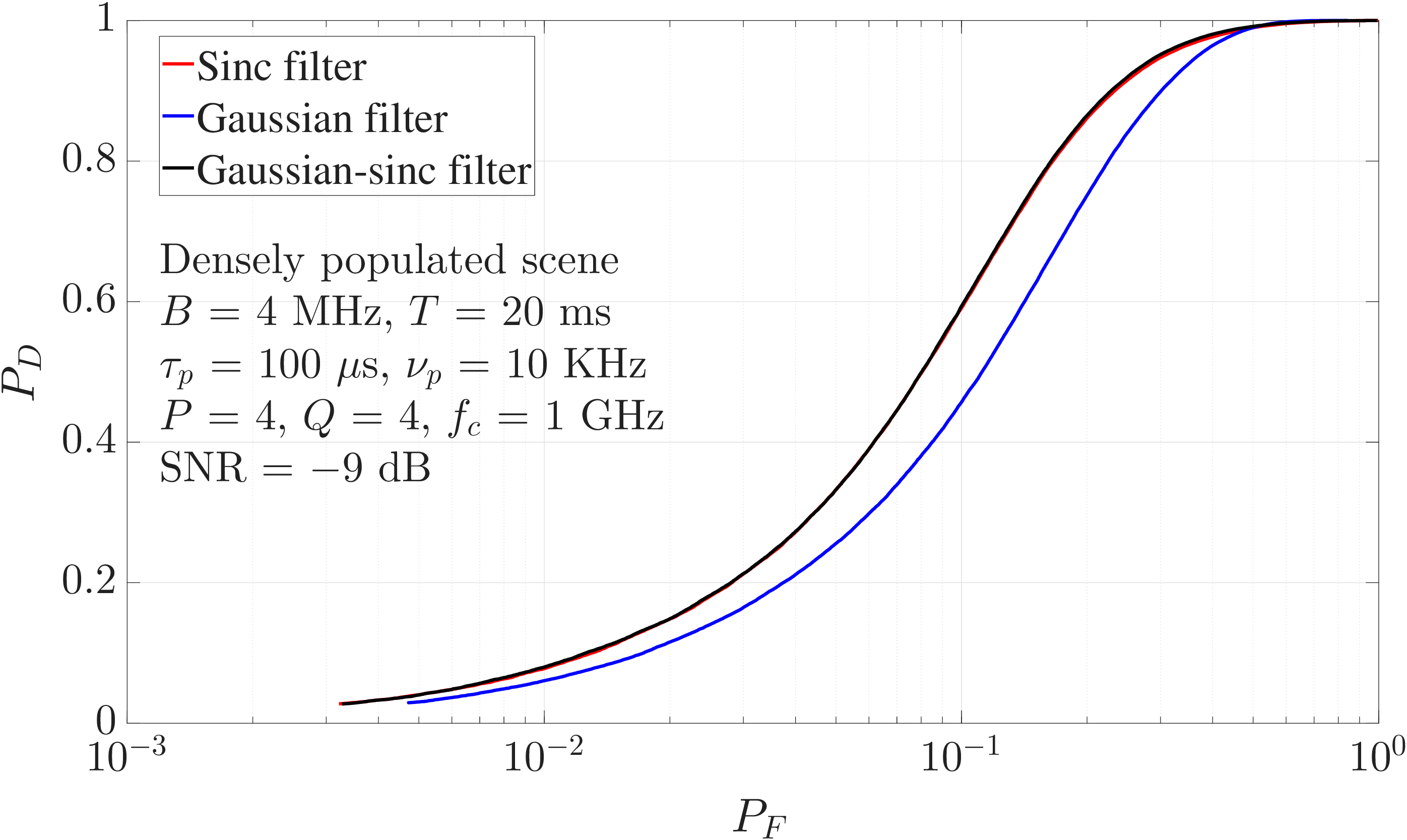}
\caption{ROC curves for different filters in densely populated scene with four targets.}
\label{fig:dense_ROC}
\end{figure}

In Fig. \ref{fig:dense_range}, we plot the RMS range estimation error for the targets as a function of SNR for different filters, for receivers without and with ITI mitigation. We observe that, with the basic peak-detection based receiver without ITI mitigation, the Gaussian filtered waveform performs much worse than the sinc and GS filtered waveforms. This is due to the broader peaks associated with each of the closely spaced targets, rendering the targets indistinguishable as different peaks. In contrast, the narrow peaks associated with the targets in the cross-ambiguity function for sinc and GS filtered waveforms allow better resolvability and performance. As both sinc and GS filtered waveforms have narrow main lobes and similar PSLRs, they give similar performance. We further observe that with ITI mitigation, the performance of all the filters improves, and this improvement is more significant for the Gaussian filter. However, the sinc and GS filters are found to perform better than the Gaussian filter after ITI mitigation as well. Similar performance trends can be observed in the velocity estimation results presented in Fig. \ref{fig:dense_vel}.

\begin{figure}
\centering
\includegraphics[width = 7.5cm, height = 6cm]{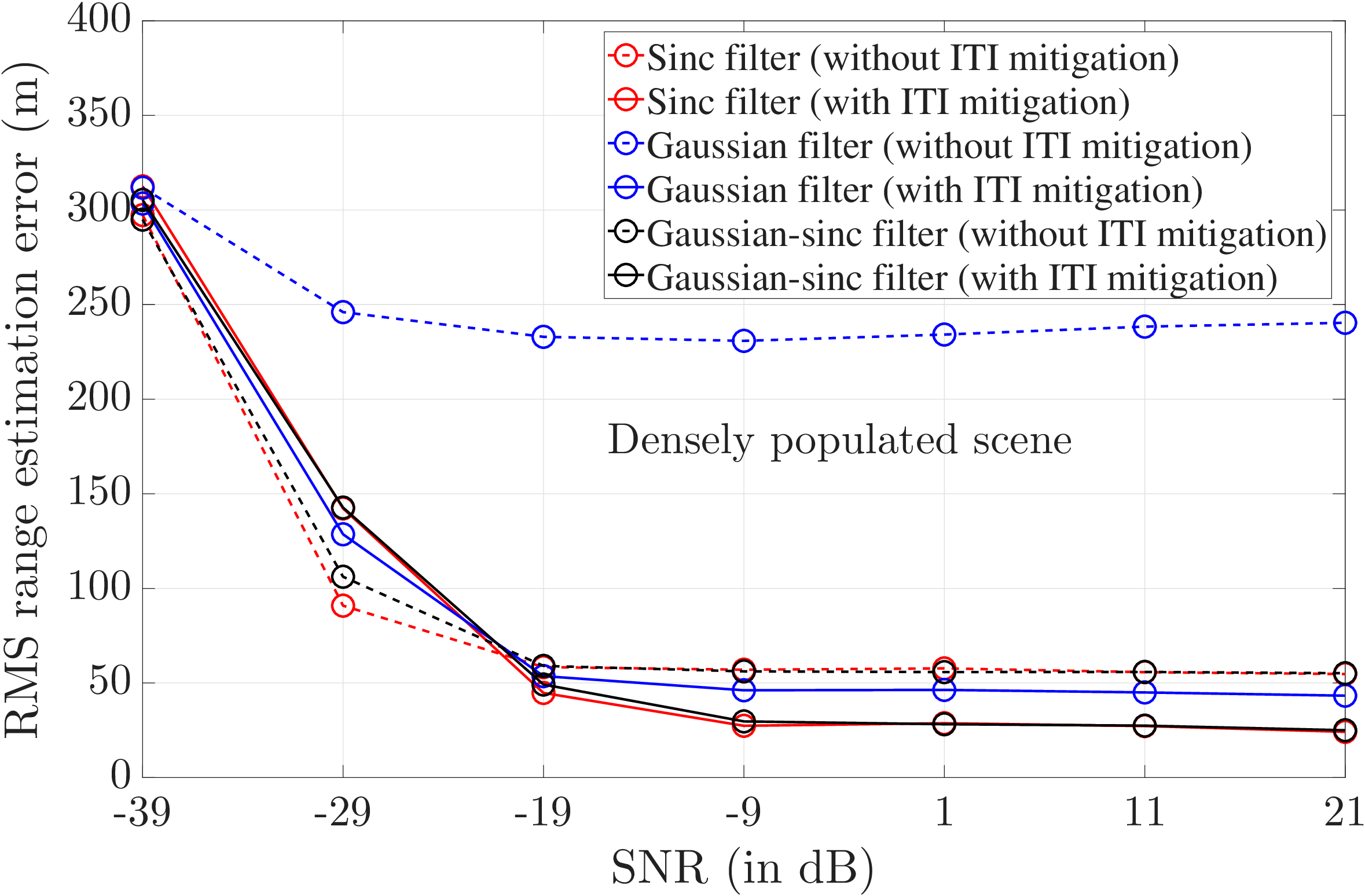}
\caption{RMS range estimation error vs SNR in densely populated scene with four targets for different filters.}
\label{fig:dense_range}
\end{figure}

\begin{figure}
\centering
\includegraphics[width = 7.5cm, height = 6cm]{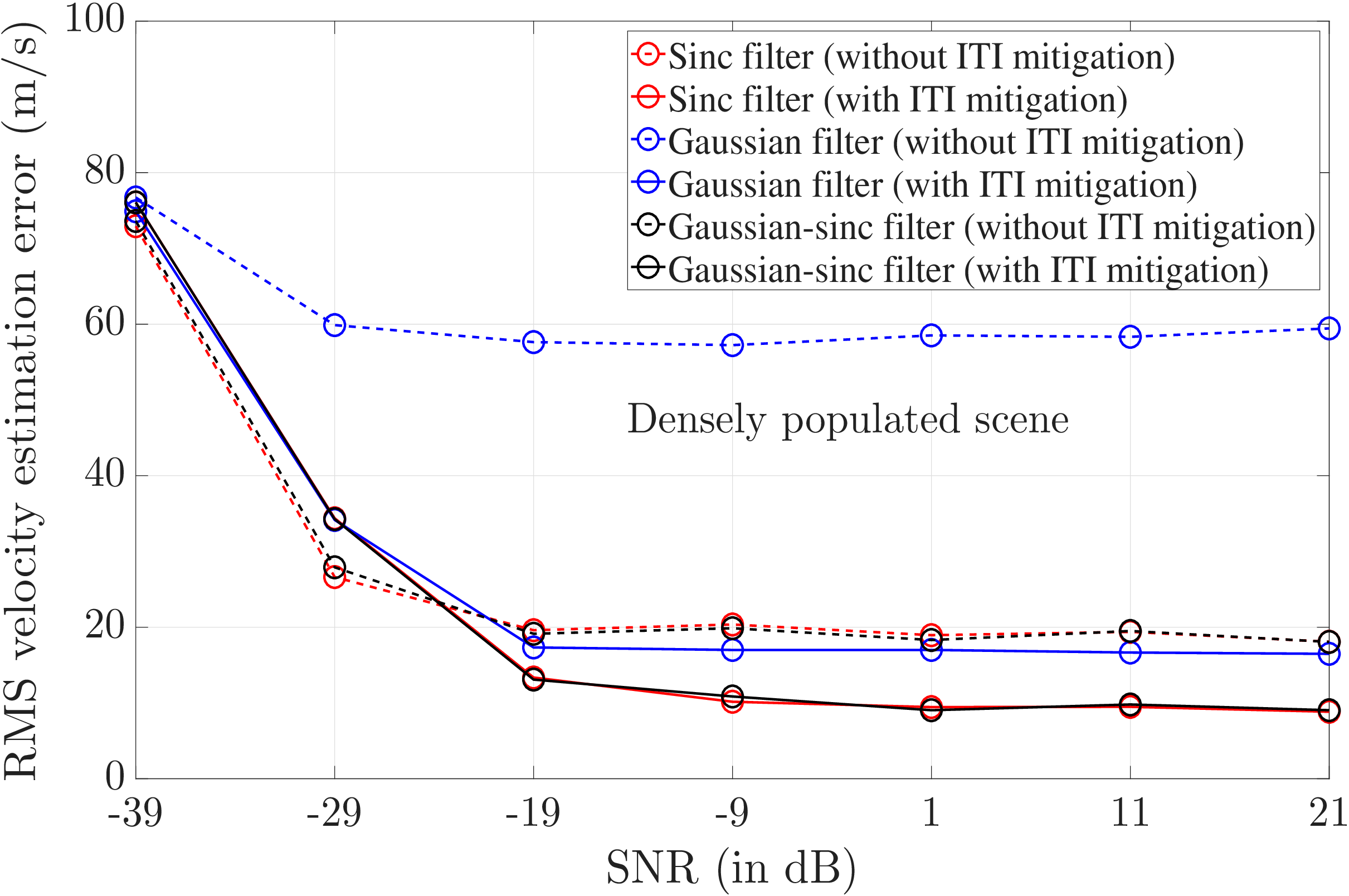}
\caption{RMS velocity estimation error vs SNR in densely populated scene with four targets for different filters.}
\label{fig:dense_vel}
\end{figure}

\subsection{Scenes with sparsely populated targets}
Here, we consider sparsely populated scenes with four targets uniformly distributed in the DD window $\Omega_{\text{s}}$. Corresponding to this window, we consider the peak detection DD window as $[200 , 207]$ $\mu$s $\times$ $[-1400,1400]$ Hz. Figure \ref{fig:sparse_ROC} shows the ROC curve ($P_{D}$ vs $P_{F}$) at an SNR of $-9$ dB. We observe that, unlike in the densely populated scenes where the Gaussian filter performed the worst, here in the sparsely populated scenes the Gaussian filter performs the best compared to the sinc and GS filters. Also, among the sinc and GS filters, sinc filter performs better. This is attributed to the fact that there are almost no sidelobes in the self-ambiguity of the Gaussian filtered waveform, which results in very low interference from other well separated targets. On the other hand, the self-ambiguity of the GS filtered waveform has an ISLR that is higher and lower than those of the Gaussian and sinc filtered waveforms, respectively. Thus, the interference encountered by a target from other well separated targets is much less in the case of the GS filter than that of the sinc filter, resulting in GS filter performing better than sinc filter.

\begin{figure}
\centering
\includegraphics[width = 7.5cm, height = 6cm]{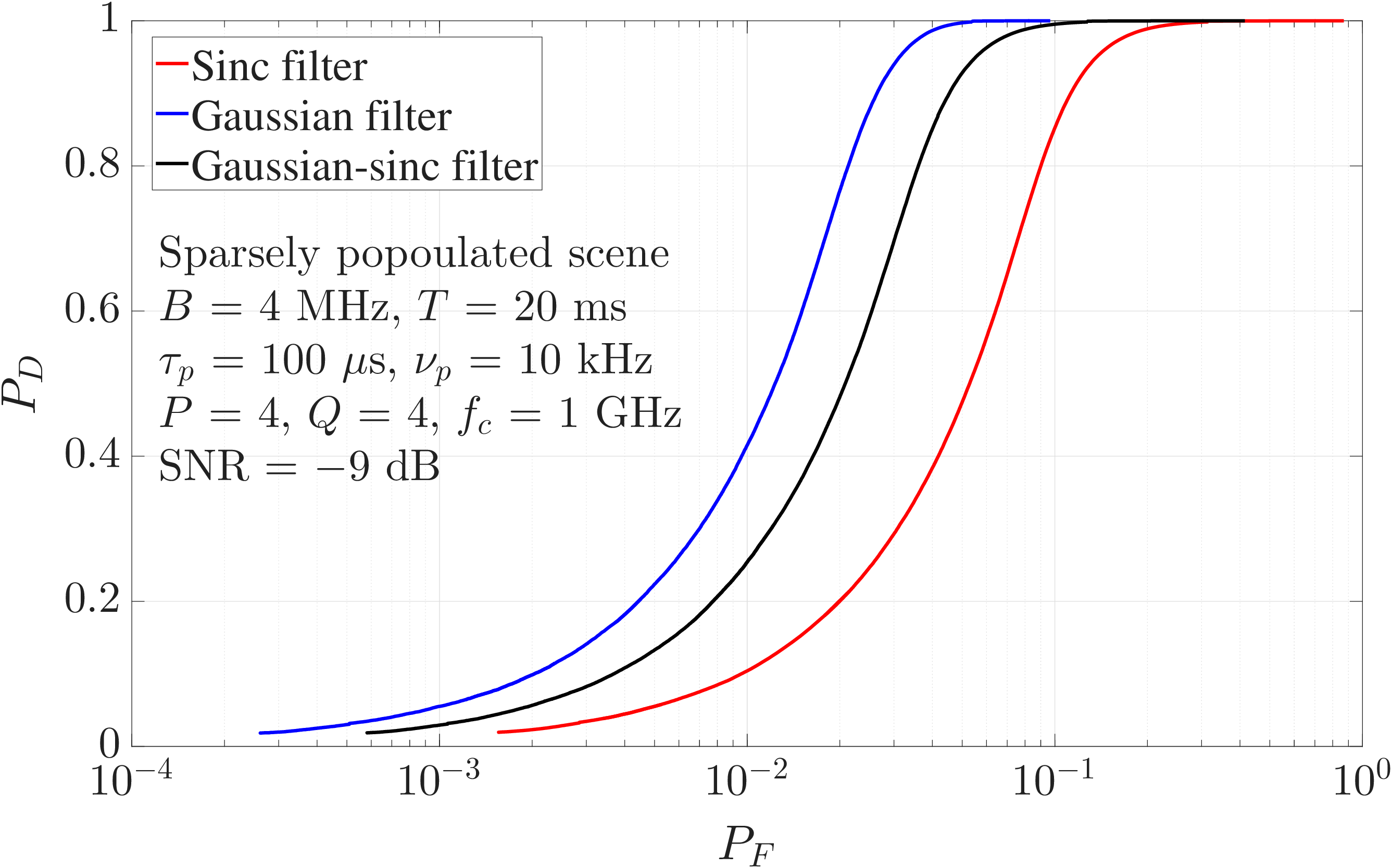}
\caption{ROC curves for different filters in sparsely populated scene with four targets.}
\label{fig:sparse_ROC}
\end{figure}

In Fig. \ref{fig:sparse_range}, we plot the RMS range estimation error performance as a function of SNR for the sparsely populated scenes. Here, we observe that, without ITI mitigation, the Gaussian filter achieves the best performance (due to very low side lobes). However, with ITI mitigation, the sinc and GS filters perform better. This is because the narrow main lobes of the sinc and GS filtered waveforms allow the interference to be estimated and canceled more effectively compared to the Gaussian filtered waveform. A similar trend can be observed in the velocity estimation performance shown in Fig. \ref{fig:sparse_vel}.

\begin{figure}
\centering
\includegraphics[width = 7.5cm, height = 6cm]{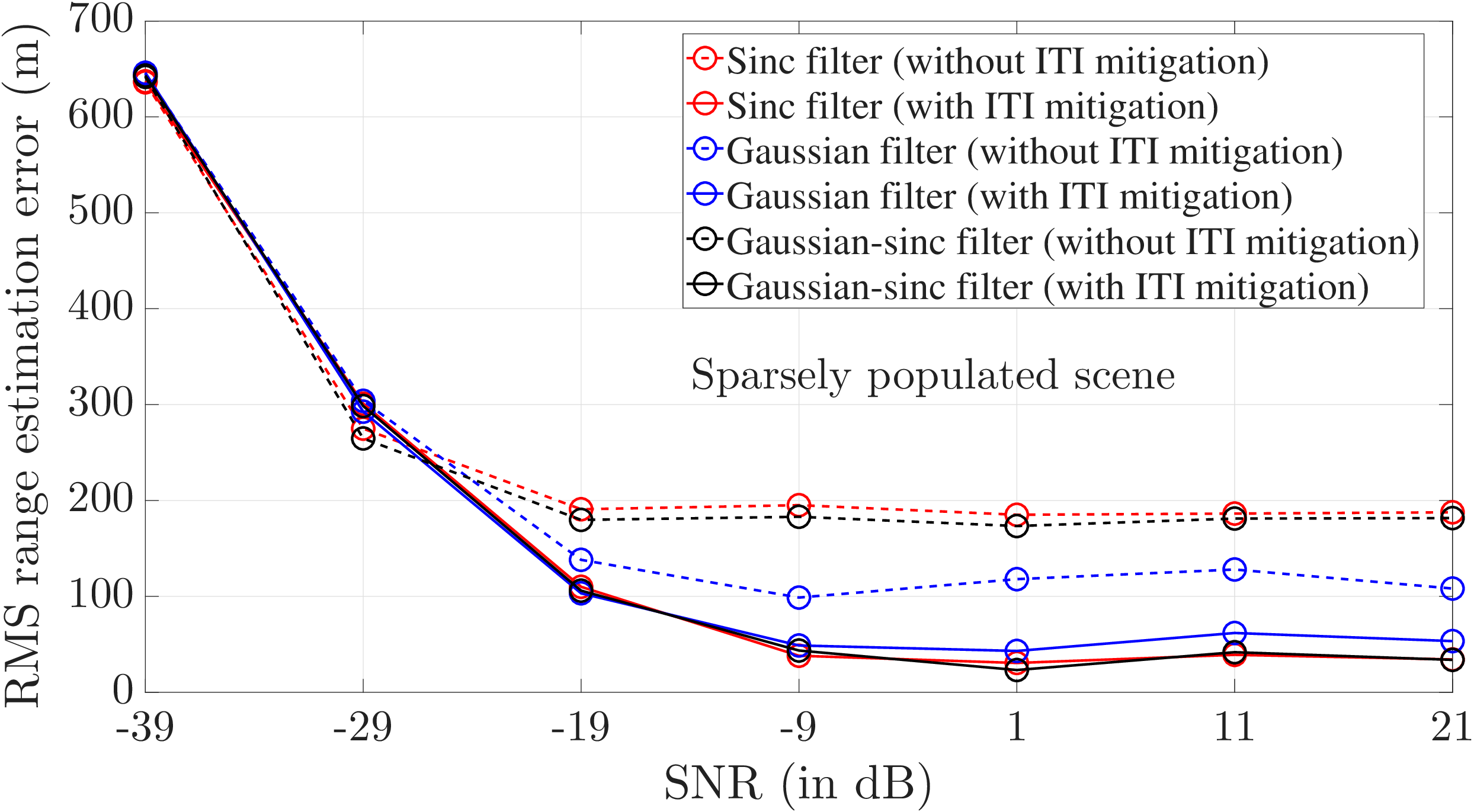}
\caption{RMS range estimation error vs SNR in sparsely populated scene with four targets for different filters.}
\label{fig:sparse_range}
\end{figure}

\begin{figure}
\centering
\includegraphics[width = 7.5cm, height = 6cm]{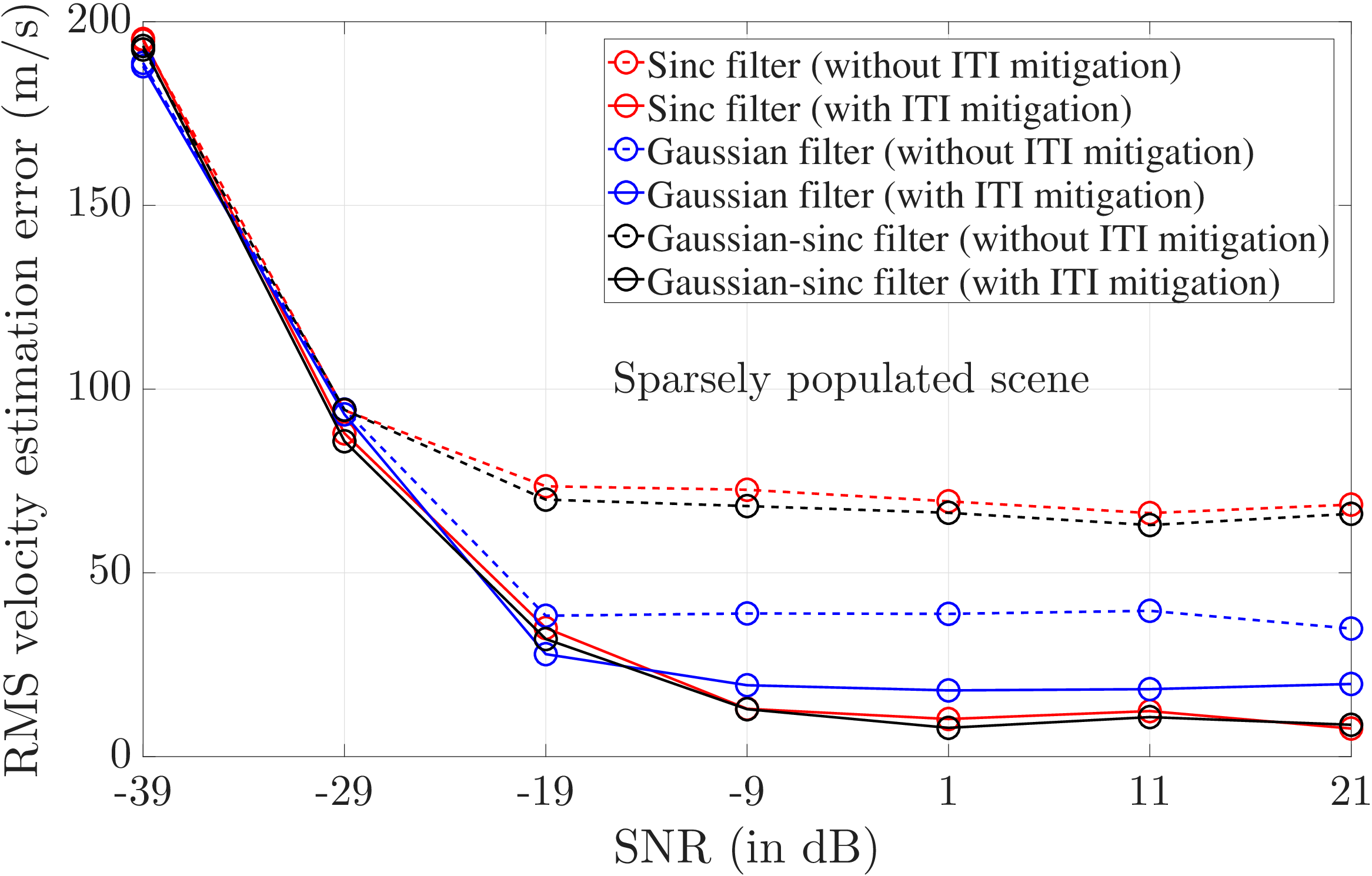}
\caption{RMS velocity estimation error vs SNR in sparsely populated scene with four targets for different filters.}
\label{fig:sparse_vel}
\end{figure}

\subsection{Summary}
\label{subsec:summary}
Based on the results/observations, we infer the following. 
\begin{itemize}
\item {\em Detection performance (ROC):}
In dense scenes, due to their narrow main lobes, sinc and GS filters give better detection performance compared to Gaussian filter. In sparse scenes, due to its very low sidelobes, Gaussian filter gives better detection performance compared to sinc and GS filters. 
\item {\em Range/velocity estimation performance:} 
When the basic peak-detection based receiver without ITI mtigation is used, the sinc and GS filters perform better in densely populated scenes due to their narrow main lobes, whereas the Gaussian filter performs better in sparsely populated scenes due to its very low sidelobes. When the ITI mitigating receiver is used, because of  effective interference estimation and mitigation enabled by their narrow main lobes, the sinc and GS filters perform better in both dense and sparsely populated scenes compared to the Gaussian filter.
\end{itemize}

\section{Conclusions}
\label{sec:concl}
We investigated the effect of DD domain pulse shaping filter on the radar sensing performance of Zak-OTFS waveform for different radar scenes (scenes with dense and sparsely populated targets) and receivers (receivers without and with ITI mitigation). Recognizing the key role that the self-ambiguity function of the waveform generated using different filters plays on the detection (ROC) and range/velocity estimation performance, we characterized the main lobe width, PSLR, and ISLR of the self-ambiguities of sinc, Gaussian, and GS filtered Zak-OTFS waveforms. Taking advantage of the analytical tractability of the Zak-OTFS framework, we derived closed-form expressions for the self-ambiguities of the sinc and GS filtered waveforms which have not been reported before. These expressions resulted in significant reduction of simulation run times compared to numerical integrations. Simulation results showed that different filters influenced the detection and range/velocity estimation performance differently, depending on the type of radar scene and receiver used, as summarized in Sec. \ref{subsec:summary} in the previous section. Investigation of filter designs that achieve  good balance of sensing and communication performance in  multi-target/multiuser environments can be taken up for future research.    

\appendices

\section{Derivation of equation (\ref{eq:noise_cov_amb})}
\label{app:derive_noise_cov}
In the derivation given in (\ref{eq:proof_cov}) at the top of next page, step (a) follows from substituting the expression for $A_{n,x}(\tau,\nu)$ given in (\ref{eq:Anx}). We use the fact that $n(t)$ is AWGN, i.e, $\mathbb{E}[n(t_{1})n^{*}(t_{2})] = N_{0}\delta(t_{1}-t_{2})$ in step (a) to arrive at step (b). Step (c) follows from the sifting property of Dirac-delta function $\delta(.)$. Subsequently, we substitute $t = t_{2}-\tau_{2}$ in the integral to get the expression in step (d). The last expression in step (e) is obtained using the expression for $A_{x,x}(\tau,\nu)$ given in (\ref{eq:self_amb_t}). This completes the derivation. 

\begin{figure*}
  \begin{eqnarray}
  \label{eq:proof_cov}
  \mathbb{E}[A_{n,x}(\tau_{1},\nu_{1})A_{n,x}^*(\tau_{1},\nu_{1})] & \overset{(a)}{=}& \int_{t_1}\int_{t_2}\mathbb{E}[n(t_1)n^*(t_{2})] x^{*}(t_{1}-\tau_{1})x(t_{2}-\tau_{2})e^{-j2\pi\nu_{1}(t_{1}-\tau_{1})} e^{j2\pi\nu_{2}(t_{2}-\tau_{2})}dt_{1}dt_{2}
  \nonumber\\
  &\overset{(b)}{=}& N_{0}\int_{t_1}\int_{t_2} \delta(t_{1}-t_{2})x^{*}(t_{1}-\tau_{1})x(t_{2}-\tau_{2})e^{-j2\pi\nu_{1}(t_{1}-\tau_{1})} e^{j2\pi\nu_{2}(t_{2}-\tau_{2})}dt_{1}dt_{2}\nonumber \\
  &\overset{(c)}{=} &N_{0}\int_{t_{2}}x^{*}(t_{2}-\tau_{1})x(t_{2}-\tau_{2})e^{-j2\pi\nu_{1}(t_{2}-\tau_{1})} e^{j2\pi\nu_{2}(t_{2}-\tau_{2})}dt_{2}\nonumber \\
  &\overset{(d)}{=}& N_{0} e^{-j2\pi\nu_{1}(\tau_{2}-\tau_{1})} e^{j2\pi(\nu_{2}-\nu_{1})(\tau_{1}-\tau_{2})}\int_{t} x(t)x^*(t-\tau_{1}+\tau_{2})e^{-j2\pi (t-\tau_{1}+\tau_{2})(\nu_{1}-\nu_{2})}dt\nonumber \\
  &\overset{(e)}{=}&N_{0} e^{-j2\pi\nu_{2}(\tau_{2}-\tau_{1})}A_{x,x}(\tau_{1}-\tau_{2},\nu_{1}-\nu_{2}).
\end{eqnarray}
\hrule
\end{figure*}

\section{Proof of Theorem \ref{thm:self_amb_gauss}}
\label{app:proof_self_amb_gauss}
Using the fact that $w_{\text{tx}}(\tau,\nu) = w_{\text{tx}}^{*}(-\tau,-\nu)$ for the Gaussian filter given in (\ref{eq:wtx_gauss}), and substituting $w_{\text{tx}}(\tau,\nu) = w_{1}(\tau)w_{2}(\nu)$ in (\ref{eq:self_amb_DD}), we get (\ref{eq:self_amb_sep}). 
\begin{figure*}[t]
\begin{eqnarray}
    \label{eq:self_amb_sep}
    A_{x,x}(\tau,\nu)=\sum_{n\in \mathbb{Z}} \ \sum_{m \in \mathbb{Z}} \ \underbrace{ \int_{\tau'} w_{1}(\tau')w_{1}(\tau-\tau'-n\tau_{\text{p}})e^{j2\pi\nu(\tau-\tau'-n\tau_{\text{p}})}}_{I_{1}(\tau,\nu)}d\tau' \underbrace{\int_{\nu'}w_{2}(\nu')w_{2}(\nu-\nu'-m\nu_{\text{p}})e^{j2\pi\nu'(n\tau_{\text{p}})}d\nu'}_{I_{2}(\tau,\nu)}.
\end{eqnarray}
\hrule
\end{figure*}
Substituting $w_{1}(\tau) = \left(\frac{2\alpha_{\tau}B^2}{\pi}\right)^{1/4}e^{-\alpha_{\tau}B^2\tau^2}$ in the $I_{1}(\tau,\nu)$ expression given in (\ref{eq:self_amb_sep}), we obtain
\begin{eqnarray}
\label{eq:gauss_I1}
    I_{1}(\tau,\nu) = \sqrt{\frac{2\alpha_{\tau}B^2}{\pi}}\int_{\tau'}e^{-\alpha_{\tau}B^2\tau'^2}e^{-\alpha_{\tau}B^2(\tau-\tau'-n\tau_{\text{p}})^2}\nonumber\\e^{j2\pi\nu(\tau-\tau'-n\tau_{\text{p}})}d\tau'.
\end{eqnarray}
Substituting $t_{0} = \tau-n\tau_{\text{p} }$ in (\ref{eq:self_amb_sep}), we get 
\begin{eqnarray}
\label{eq:gauss_I1_2}
    \hspace{-50mm}I_{1}(\tau,\nu) &=& \sqrt{\frac{2\alpha_{\tau}B^2}{\pi}} e^{-\alpha_{\tau}B^2t_{0}^2}e^{j2\pi\nu t_{0}}\nonumber\\ &&\int_{\tau'}e^{-2\alpha_{\tau}B^2\left(\tau'^2 +\frac{j2\pi\nu-2\alpha_{\tau}B^2t_{0}}{2\alpha_{\tau}B^2}\tau'\right)}d\tau'.
\end{eqnarray}
Completing the square inside the integral in (\ref{eq:gauss_I1_2}), we obtain
\begin{eqnarray}
\label{eq:gauss_I1_3}
I_{1}(\tau,\nu) \hspace{-3mm}&=\hspace{-3mm}& \sqrt{\frac{2\alpha_{\tau}B^2}{\pi}} e^{-\alpha_{\tau}B^2t_{0}^2}e^{j2\pi\nu t_{0}}e^{2\alpha_{\tau}B^2\left(\frac{j2\pi\nu-2\alpha_{\tau}B^2 t_{0}}{4\alpha_{\tau}B^2}\right)^2}\nonumber \\ &&\hspace{10mm} \int_{\tau'}e^{-2\alpha_{\tau}B^2\left(\tau'-\frac{j2\pi\nu-2\alpha_{\tau}B^2 t_{0}}{4\alpha_{\tau}B^2}\right)^2}d\tau'. 
\end{eqnarray}
Using the fact that $\int e^{-a(x-b)^2}dx = \sqrt{\frac{\pi}{a}}$, we obtain
\begin{eqnarray}
    \label{eq:gauss_I1_4}
    I_{1}(\tau,\nu) = e^{-\alpha_{\tau}B^2t_{0}^2}e^{j2\pi\nu t_{0}}e^{2\alpha_{\tau}B^2\left(\frac{j2\pi\nu-2\alpha_{\tau}B^2 t_{0}}{4\alpha_{\tau}B^2}\right)^2}.
\end{eqnarray}
Similarly, using $w_{2}(\nu) = \left(\frac{2\alpha_{\nu}T^2}{\pi}\right)^{1/4}e^{-\alpha_{\nu}T^2\nu^2}$ in the $I_{2}(\tau,\nu)$ expression given in (\ref{eq:self_amb_sep}), we obtain
\begin{eqnarray}
\label{eq:gauss_I2_1}
    \hspace{-50mm}I_{2}(\tau,\nu) &\hspace{-3mm}=& \hspace{-3mm}\sqrt{\frac{2\alpha_{\nu}T^2}{\pi}}\int_{\nu'}e^{-\alpha_{\nu}T^2\nu'^2}e^{j2\pi\nu'n\tau_{\text{p}}}\nonumber \\&& \hspace{25mm}e^{-\alpha_{\nu}T^2(\nu-\nu'-m\nu_{\text{p}})^2}d\nu'.
\end{eqnarray}
Substituting $f_{0} = \nu-m\nu_{\text{p}}$, completing the square inside the integral in (\ref{eq:gauss_I2_1}), and using the fact $\int e^{-a(x-b)^2}dx = \sqrt{\frac{\pi}{a}}$, we obtain 
\begin{eqnarray}
\label{eq:gauss_I2_2}
    I_{2}(\tau,\nu) = e^{-\alpha_{\nu}T^2f_{0}^2}e^{\frac{\alpha_{\nu}T^2}{2}\left(f_{0}+\frac{j2\pi n\tau_{\text{p}}}{2\alpha_{\nu}T^2}\right)^2}.
\end{eqnarray}
Substituting (\ref{eq:gauss_I1_4}) and (\ref{eq:gauss_I2_2}) in (\ref{eq:self_amb_sep}), we get the result in the Theorem \ref{thm:self_amb_gauss}. This completes the proof.

\section{Proof of Theorem \ref{thm:self_amb_GS}}
\label{app:proof_self_amb_GS}
Observing that (\ref{eq:self_amb_sep}) is also applicable for the GS filter given in (\ref{eq:wtx_GS}) and substituting $w_{1}(\tau) = \Omega_{\tau}\sqrt{B}\mathrm{sinc}(B\tau)e^{-\alpha_\tau B^2\tau^2}$ in the $I_{1}(\tau,\nu)$ expression given in (\ref{eq:self_amb_sep}), we obtain 
\begin{eqnarray}
\label{eq:GS_I11}
    I_{1}(\tau,\nu) = \Omega_{\tau}^2B\int_{\tau'}\mathrm{sinc}(B\tau')\mathrm{sinc}(B(\tau-\tau'-n\tau_{\text{p}}))\nonumber \\ e^{-\alpha_{\tau}B^2\tau'^2}e^{-\alpha_{\tau}B^2(\tau-\tau'-n\tau_{\text{p}})^2}e^{j2\pi\nu(\tau-\tau'-n\tau_{\text{p}})}d\tau'.
\end{eqnarray}
Let $x(\tau') = \mathrm{sinc}(B\tau')\mathrm{sinc}(B(\tau-\tau'-n\tau_{\text{p}}))$ and $y(\tau') = e^{-\alpha_{\tau}B^2\tau'^2} e^{-\alpha_{\tau}B^2(\tau-\tau'-n\tau_{\text{p}})^2}e^{j2\pi\nu(\tau-\tau'-n\tau_{\text{p}})}$. We substitute $t_{0} = \tau - n\tau_{\text{p}}$.
When $t_{0} \neq 0$,
\begin{eqnarray}
X^{*}(f) &=& \left(\int x(\tau')e^{-j2\pi f\tau'}d\tau'\right)^{*} \nonumber \\&\hspace{-20mm}=&\hspace{-12mm} \frac{1}{j2\pi B^2 t_{0}}\left(-e^{-j2\pi(\frac{B}{2}-f)t_{0}}+e^{j\pi B t_{0}}\right) \mathbb{I}_{0<f<B} \ \nonumber \\&\hspace{-15mm}&\hspace{-13mm}+ \frac{1}{j2\pi B^2 t_{0}}\left(e^{j2\pi(\frac{B}{2}+f)t_{0}}-e^{-j\pi B t_{0}}\right) \mathbb{I}_{-B<f<0},\nonumber
\end{eqnarray}
 and
\begin{eqnarray}
Y(f) &=& \int y(\tau')e^{-j2\pi f\tau'}d\tau' \nonumber \\&\hspace{-17mm}=&\hspace{-10mm}\sqrt{\frac{\pi}{2\alpha_{\tau}B^2}}e^{-\frac{\alpha_{\tau}B^2}{2}t_{0}^2}e^{j2\pi \nu t_{0}}e^{-\frac{\pi^2(f+\nu)^2}{2\alpha_{\tau}B^2}}e^{-j\pi(f+\nu)t_{0}}.   \nonumber
\end{eqnarray}
Applying Parseval's theorem, i.e., $\int x^*(\tau')y(\tau')d\tau' = \int X^{*}(f)Y(f)df$ in (\ref{eq:GS_I11}),  we get $I_{1,1}(\tau,\nu) = I_{1,1,1}(\tau,\nu)+I_{1,1,2}(\tau,\nu)$, where $I_{1,1,1}(\tau,\nu)$ and $I_{1,1,2}(\tau,\nu)$ are given by (\ref{eq:GS_I111}) and (\ref{eq:GS_I112}), respectively. 
In (\ref{eq:GS_I111}), to arrive at the expression for $I_{1,1,1}(\tau,\nu)$, we substitute $P_{1} =  \frac{\Omega_{\tau}^2}{j2\pi t_{0} B} \sqrt{\frac{\pi}{2\alpha_{\tau}B^2}}e^{-\frac{\alpha_\tau B^2 t_{0}^2}{2}}$ to get the expression in step (a). We complete the square inside the integrals in the step (a) and substitute $K_{1} = \frac{\alpha_{\tau}B^2}{\pi^2}\left(\frac{\pi^2 \nu}{\alpha_{\tau} B^2} + j\pi t_{0}\right)$ and $K_{2} = \frac{\alpha_{\tau}B^2}{\pi^2}\left(\frac{\pi^2 \nu}{\alpha_{\tau} B^2} - j\pi t_{0}\right)$, to arrive at step (b). We use the the function $g_{1}(.)$ to obtain the expression in step (c), where $g_{1}(a,t,s) = \int_{s}^{t}e^{-ax^2}dx = \frac{\sqrt{\pi}}{2\sqrt{a}}\left({erf}(\sqrt{a}t) - {erf}(\sqrt{a}s)\right)$ for complex numbers $a,t,s$.
\begin{figure*}[t]
\begin{eqnarray}
\label{eq:GS_I111}
I_{1,1,1}(\tau,\nu) &=&\frac{\Omega_{\tau}^2}{j 2\pi B t_{0}} \sqrt{\frac{\pi}{2\alpha_{\tau}B^2}}e^{-\frac{\alpha_{\tau}B^2 t_{0}^2}{2}} \int_{0}^{B}\left(e^{j\pi B t_{0}}-e^{-j2\pi\left(\frac{B}{2}-f\right)t_{0}}\right)e^{j2\pi \nu t_{0}}e^{-\frac{\pi^2 (f+\nu)^2}{2 \alpha_{\tau}B^2}}e^{-j \pi (f+\nu)t_{0}}df\nonumber\\
&\hspace{-38mm}\overset{(a)}{=}&\hspace{-20mm} P_{1}e^{-\frac{\pi^2 \nu^2}{2\alpha_{\tau}B^2}}e^{j\pi\nu t_{0}}\left[e^{j\pi Bt_{0}} \int_{0}^{B} e^{-\frac{\pi^2}{2\alpha_{\tau}B^2}\left(f^2 +\frac{2\alpha_{\tau}B^2}{\pi^2}\left(\frac{2\pi^2\nu}{2\alpha_{\tau}B^2}+j\pi t_{0}\right)f\right)} - e^{-j\pi Bt_{0}} \int_{0}^{B}e^{-\frac{\pi^2}{2\alpha_{\tau}B^2}\left(f^2 +\frac{2\alpha_{\tau}B^2}{\pi^2}\left(\frac{2\pi^2\nu}{2\alpha_{\tau}B^2}-j\pi t_{0}\right)f\right)}\right]\nonumber\\
&\hspace{-38mm}\overset{(b)}{=}&\hspace{-20mm} P_{1}e^{-\frac{\pi^2 \nu^2}{2\alpha_{\tau}B^2}}e^{j\pi\nu t_{0}}\left[e^{j\pi Bt_{0}}e^{\frac{\pi^2} {2\alpha_{\tau}B^2}K_{1}^2}\int_{0}^{B}e^{-\frac{\pi^2}{2\alpha_{\tau}B^2}\left(f+K_{1}\right)^2}df - e^{-j\pi Bt_{0}}e^{\frac{\pi^2} {2\alpha_{\tau}B^2}K_{2}^2}\int_{0}^{B}e^{-\frac{\pi^2}{2\alpha_{\tau}B^2}\left(f+K_{2}\right)^2}df\right]\nonumber \\
&\hspace{-38mm}\overset{(c)}{=}&\hspace{-20mm} P_{1}e^{-\frac{\pi^2 \nu^2}{2\alpha_{\tau}B^2}}e^{j\pi\nu t_{0}}\left[e^{j\pi Bt_{0}}e^{\frac{\pi^2} {2\alpha_{\tau}B^2}K_{1}^2}g_{1}\left(\frac{\pi^2}{2\alpha_{\tau}B^2},B+K_{1},K_{1}\right) -e^{-j\pi Bt_{0}}e^{\frac{\pi^2} {2\alpha_{\tau}B^2}K_{2}^2}g_{1}\left(\frac{\pi^2}{2\alpha_{\tau}B^2},B+K_{2},K_{2}\right)  \right].
\end{eqnarray}
\hrule
\end{figure*}
\begin{figure*}[t]
\begin{eqnarray}
\label{eq:GS_I112}
\hspace{-2mm}
I_{1,1,2}(\tau,\nu) &=& \frac{\Omega_{\tau}^2}{j 2\pi B t_{0}} \sqrt{\frac{\pi}{2\alpha_{\tau}B^2}}e^{-\frac{\alpha_{\tau}B^2 t_{0}^2}{2}} \int_{-B}^{0} \left(e^{j2\pi\left(\frac{B}{2}+f\right)t_{0}}-e^{-j\pi Bt_{0}}\right)e^{j2\pi \nu t_{0}}e^{-\frac{\pi^2 (f+\nu)^2}{2 \alpha_{\tau}B^2}}e^{-j \pi (f+\nu)t_{0}}df\nonumber\\
&\hspace{-44mm}\overset{(a)}{=}&\hspace{-24.5mm} P_{1}e^{-\frac{\pi^2 \nu^2}{2\alpha_{\tau}B^2}}e^{j\pi\nu t_{0}}\left[e^{j\pi Bt_{0}}e^{\frac{\pi^2} {2\alpha_{\tau}B^2}K_{2}^2}\int_{-B}^{0}e^{-\frac{\pi^2}{2\alpha_{\tau}B^2}\left(f+K_{2}\right)^2}df - e^{-j\pi Bt_{0}}e^{\frac{\pi^2} {2\alpha_{\tau}B^2}K_{1}^2}\int_{-B}^{0}e^{-\frac{\pi^2}{2\alpha_{\tau}B^2}\left(f+K_{1}\right)^2}df\right]\nonumber \\
&\hspace{-44mm}\overset{(b)}{=}&\hspace{-24.5mm} P_{1}e^{-\frac{\pi^2 \nu^2}{2\alpha_{\tau}B^2}}e^{j\pi\nu t_{0}}\left[e^{j\pi Bt_{0}}e^{\frac{\pi^2} {2\alpha_{\tau}B^2}K_{2}^2}g_{1}\left(\frac{\pi^2}{2\alpha_{\tau}B^2},K_{2},-B+K_{2}\right) -e^{-j\pi Bt_{0}}e^{\frac{\pi^2} {2\alpha_{\tau}B^2}K_{1}^2}g_{1}\left(\frac{\pi^2}{2\alpha_{\tau}B^2},K_{1},-B+K_{1}\right)  \right]\hspace{-1mm}.
\end{eqnarray}
\hrule
\end{figure*}
Similarly, in the expression for $I_{1,1,2}(\tau,\nu) $ in (\ref{eq:GS_I112}), we substitute for $P_{1}$, $K_{1}$ and $K_{2}$ to get the expression in step (a). Thereafter, we complete the square inside the integrals to obtain the last expression in step (b).
When $t_{0}=0$, 
\begin{eqnarray}
    X^{*}(f) = \frac{B-f}{B^2}\mathbb{I}_{0<f<B} + \frac{B+f}{B^2}\mathbb{I}_{-B<f<0},
\end{eqnarray}
and
\begin{eqnarray}
        Y(f) = \sqrt{\frac{\pi}{2\alpha_{\tau}B^2}}e^{-\frac{\pi^2(f+\nu)^2}{2\alpha_{\tau}B^2}}.
\end{eqnarray}
Applying Parseval's theorem in (\ref{eq:GS_I11}) and substituting for $X^*(f)$ and $Y(f)$ in $I_{1,2}(\tau,\nu) = \int X^{*}(f)Y(f)df$, we get
\begin{eqnarray}
\label{eq:GS_I12_1}
    I_{1,2}(\tau,\nu) 
    =\Omega_{\tau}^2 B\sqrt{\frac{\pi}{2\alpha_{\tau}B^2}} 
          \left[\int_{0}^B\frac{B-f}{B^2}e^{-\frac{\pi^2(f+\nu)^2}{2\alpha_{\tau}B^2}}df \right. \nonumber \\+\left.  \int_{-B}^0\frac{B+f}{B^2}e^{-\frac{\pi^2(f+\nu)^2}{2\alpha_{\tau}B^2}}df\right].      
\end{eqnarray}
Substituting $x = f+\nu$, we obtain 
\begin{eqnarray}
\label{eq:GS_I12_2}
    I_{1,2}(\tau,\nu) 
    &=&\frac{\Omega_{\tau}^2}{B} \sqrt{\frac{\pi}{2\alpha_{\tau}B^2}} 
          \nonumber \\ &&\hspace{-15mm}\left[(B+\nu)\int_{\nu}^{B+\nu}e^{-\frac{\pi^2x^2}{2\alpha_{\tau}B^2}}dx   - \int_{\nu}^{B+\nu}xe^{-\frac{\pi^2x^2}{2\alpha_{\tau}B^2}}dx \right. \nonumber \\&& \hspace{-25mm}+\left.  \int_{-B+\nu}^{\nu}xe^{-\frac{\pi^2x^2}{2\alpha_{\tau}B^2}}dx + (B-\nu)\int_{-B+\nu}^{\nu}e^{-\frac{\pi^2x^2}{2\alpha_{\tau}B^2}}dx\right]. 
\end{eqnarray}
We define the functions $g_{1}(.)$ and $g_2(.)$ as $g_{1}(a,t,s) = \int_{s}^{t}e^{-ax^2}dx = \frac{\sqrt{\pi}}{2\sqrt{a}}\left(erf(\sqrt{a}t) - erf(\sqrt{a}s)\right)$ and $g_{2}(a,t,s) = \int_s^{t}xe^{-ax^2}dx= \frac{1}{2a}\left(e^{-as^2}-e^{-at^2}\right)$, respectively. Using these functions for (\ref{eq:GS_I12_2}), we get the expression of $I_{1,2}(\tau,\nu)$ given in Theorem \ref{thm:self_amb_GS}. Thus, we can write $I_{1}(\tau,\nu) = I_{1,1}(\tau,\nu)\mathbb{I}_{t_{0}\neq 0} + I_{1,2}(\tau,\nu)\mathbb{I}_{t_{0}=0}$.

Now, substituting $w_{2}(\nu) = \Omega_{\nu}\sqrt{T}\mathrm{sinc}(T\nu)e^{-\alpha_\nu T^2\nu^2}$ in the $I_{2}(\tau,\nu)$ expression given in (\ref{eq:self_amb_sep}), we obtain
\begin{eqnarray}
\label{eq:GS_I2_1}
    I_{2}(\tau,\nu) = \Omega_{\nu}^2T\int_{\nu'}\mathrm{sinc}(T\nu')\mathrm{sinc}(T(\nu-\nu'-m\nu_{\text{p}}))\nonumber \\ e^{-\alpha_{\nu}T^2\nu'^2}e^{-\alpha_{\nu}T^2(\nu-\nu'-m\nu_{\text{p}})^2}e^{j2\pi\nu'n\tau_{\text{p}}}d\tau'.
\end{eqnarray}
Let $Y(\nu') = \mathrm{sinc}(T\nu')\mathrm{sinc}(T(\nu-m\nu_{\text{p}}))$ and $X(\nu') = e^{-\alpha_{\nu}T^2\nu'^2}e^{-\alpha_{\nu}T^2(\nu-\nu'-m\nu_{\text{p}})^2}e^{j2\pi\nu'n\tau_{\text{p}}}$. We substitute $f_{0} = \nu-m\nu_{\text{p}}$. When $f_{0}\neq 0$,
\begin{eqnarray}
    x(t) &=& \int X(\nu')e^{j2\pi\nu't}d\nu' \nonumber \\&\hspace{-20mm}=&\hspace{-13mm} \sqrt{\frac{\pi^2}{2\alpha_{\nu}T^2}}e^{-\frac{\alpha_{\nu}T^2f_{0}^2}{2}}e^{-\frac{\pi^2}{2\alpha_{\nu}T^2}(t+n\tau_{\text{p}})^2}e^{j\pi f_{0}(t+n\tau_{\text{p}})},\nonumber
\end{eqnarray}
and
\begin{eqnarray}
    y^{*}(t) &=& \left(\int Y(\nu')e^{j2\pi \nu' t}d\nu'\right)^{*} \nonumber \\
    &\hspace{-22mm}=&\hspace{-13mm}\frac{1}{j2\pi f_{0}T^2}\left(e^{j2\pi f_{0} \left(\frac{T}{2}-t\right)}-e^{-j\pi f_{0}T}\right)\mathbb{I}_{0<t<T}\nonumber \\
    &\hspace{-15mm}+& \hspace{-10mm} \frac{1}{j2\pi f_{0}T^2}\left(e^{j\pi f_{0}T}-e^{j2\pi f_{0}(-\frac{T}{2}-t)}\right)\mathbb{I}_{-T<t<0}.\nonumber
\end{eqnarray}
Applying Parseval's theorem, i.e., $\int X(\nu')Y^{*}(\nu')d\nu' = \int x(t)y^{*}(t)dt$ in (\ref{eq:GS_I2_1}), and substituting $x = t+n\tau_{p}$, we obtain (\ref{eq:GS_I21}). We use the function $f(.)$ to arrive at the expression of $I_{2,1}$ given in Theorem \ref{thm:self_amb_GS}, where $f(t,s,z,a) = \int_{s}^{t}e^{-ax^2}e^{-jzx}dx = \frac{\sqrt{\pi}}{2}\frac{e^{-z^2/4a}}{\sqrt{a}}\left[erf\left(\sqrt{a} \ t + j\frac{z}{2\sqrt{a}}\right)-erf\left(\sqrt{a} \ s + j\frac{z}{2\sqrt{a}}\right)\right]$ for complex numbers $t,s,z,a$. 
\begin{figure*}[t]
\begin{eqnarray}
\label{eq:GS_I21}
    I_{2,1}(\tau,\nu) = \frac{\Omega_{\nu}^2}{j2\pi f_{0} T} \sqrt{\frac{\pi}{2\alpha_{\nu}T^2}}e^{-\frac{\alpha_{\nu}T^2 f_{0}^2}{2}} 
    \left[-e^{-j\pi f_{0}T}\int_{n\tau_{\text{p}}}^{T+n\tau_{\text{p}}}e^{-\frac{\pi^2}{2\alpha_{\nu}T^2}x^2}e^{j\pi f_{0}x}dx 
   +e^{j\pi f_{0}T}\int_{-T+n\tau_{\text{p}}}^{n\tau_{\text{p}}}e^{-\frac{\pi^2}{2\alpha_{\nu}T^2}x^2}e^{j\pi f_{0}x}dx\right. \nonumber \\ \left.
     -e^{j2\pi f_{0}\left(-\frac{T}{2}+n\tau_{\text{p}}\right)}\int_{-T+n\tau_{\text{p}}}^{n\tau_{\text{p}}}e^{-\frac{\pi^2}{2\alpha_{\nu}T^2}x^2}e^{-j\pi f_{0}x}dx 
     +e^{j2\pi f_{0}\left(\frac{T}{2}+n\tau_{\text{p}}\right)}\int_{n\tau_{\text{p}}}^{T+n\tau_{\text{p}}}e^{-\frac{\pi^2}{2\alpha_{\nu}T^2}x^2}e^{-j\pi f_{0}x}dx\right].  
\end{eqnarray}
\hrule
\end{figure*}
\begin{figure*}[t]
\begin{eqnarray}
\label{eq:GS_I22}
    I_{2,2}(\tau,\nu) = \frac{\Omega_{\nu}^2}{ T} \sqrt{\frac{\pi}{2\alpha_{\nu}T^2}} 
    \left[(T+n\tau_{\text{p}})\int_{n\tau_{\text{p}}}^{T+n\tau_{\text{p}}}e^{-\frac{\pi^2}{2\alpha_{\nu}T^2}x^2}dx 
   -\int_{n\tau_{\text{p}}}^{T+n\tau_{\text{p}}}xe^{-\frac{\pi^2}{2\alpha_{\nu}T^2}x^2}dx\right. \nonumber \\ \left.
     +(T-n\tau_{\text{p}})\int_{-T+n\tau_{\text{p}}}^{n\tau_{\text{p}}}e^{-\frac{\pi^2}{2\alpha_{\nu}T^2}x^2}dx 
     +\int_{-T+n\tau_{\text{p}}}^{n\tau_{\text{p}}}xe^{-\frac{\pi^2}{2\alpha_{\nu}T^2}x^2}\right].  
\end{eqnarray}
\hrule
\end{figure*}
When $f_{0} = 0$, 
\begin{eqnarray}
     x(t) = \sqrt{\frac{\pi^2}{2\alpha_{\nu}T^2}}e^{-\frac{\pi^2}{2\alpha_{\nu}T^2}(t+n\tau_{\text{p}})^2},\nonumber
\end{eqnarray}
\begin{eqnarray}
       y^{*}(t) = 
    \frac{T-t}{T^2}\mathbb{I}_{0<t<T} + \frac{T+t}{T^2}\mathbb{I}_{-T<t<0}.\nonumber
\end{eqnarray}
Applying Parseval's theorem in (\ref{eq:GS_I2_1}) and substituting $x = t+n\tau_{p}$, we obtain (\ref{eq:GS_I22}). Using the functions $g_{1}(.)$ and $g_{2}(.)$ defined earlier to prove the expression for $I_{1,2}(\tau,\nu)$, we get the expression for $I_{2,2}(\tau,\nu)$ given in Theorem \ref{thm:self_amb_GS}. Finally, we get $I_{2}(\tau,\nu) = I_{2,1}(\tau,\nu)\mathbb{I}_{f_{0}\neq 0} + I_{2,2}(\tau,\nu)\mathbb{I}_{f_{0}=0}$. Using the expressions obtained  for $I_{1}(\tau,\nu)$ and $I_{2}(\tau,\nu)$ in (\ref{eq:self_amb_sep}), we get the result in Theorem \ref{thm:self_amb_GS}. This completes the proof.

The $erf(.)$ functions in the derived expressions can be computed using accurate closed-form approximations for the $erf(.)$ function \cite{erf_1},\cite{erf_2}.

\end{document}